\newtheorem{theorem}{Theorem}
\newenvironment{proof}{\paragraph{Proof:}}{\hfill$\square$}
\title{On String Contact Representations in 3D}
\author{Debajyoti Mondal\thanks{Work of the author is supported in part by the Natural Sciences and Engineering Research Council of Canada
 (NSERC).} } 
\affil{Cheriton School of Computer Science, University of Waterloo, Canada\\
  \texttt{dmondal@uwaterloo.ca}}
\newcommand{\C}{\mathcal{C}}
\begin{document}

\maketitle

\begin{abstract}

An \emph{axis-aligned string}  is  a simple polygonal path, where each line segment is parallel to an axis in $\mathbb{R}^3$. Given a graph $G$, a \emph{string contact representation} $\Psi$ of $G$ maps the vertices of $G$ to interior disjoint  axis-aligned strings, where no three strings meet at a point, and two strings share a common point if and only if their corresponding vertices are adjacent in $G$. The \emph{complexity of $\Psi$} is the minimum integer $r$ such that every string in $\Psi$ is a \emph{$B_r$-string}, i.e., a string with  at most $r$ bends.  
 While a result of Duncan et al. implies that every graph
 $G$ with maximum degree 4 has a string contact representation using $B_4$-strings, we examine constraints on $G$ 
 that allow string contact representations with complexity 3, 2 or 1. We prove that 
 if $G$ is Hamiltonian and  triangle-free, then $G$ admits a contact representation where
 all the strings but one are $B_3$-strings.
 If $G$ is 3-regular and bipartite, then $G$ admits a contact representation with string complexity 2, and if  
 we further restrict $G$ to be Hamiltonian, then $G$ has a contact representation, where all the strings
 but one are $B_1$-strings  (i.e., $L$-shapes).  Finally, we prove some complementary lower bounds on
 the complexity of string contact representations. 
\end{abstract}

\section{Introduction}
A \emph{contact system} of a geometric shape $\xi$ (e.g., line segment, rectangle, etc.) is an arrangement of a set of geometric objects of shape $\xi$, where two objects may touch,  but cannot cross each other. Representing graphs as a contact system of geometric objects is an active area of research in graph drawing.  Besides the intrinsic theoretical interest, such representations find application in many  applied fields such as cartography, VLSI floor-planning, and data visualization.  In this paper  we examine contact systems of \emph{axis-aligned strings}, where each object is a simple polygonal path with axis-aligned straight line segments. No two strings are allowed to cross, i.e., any shared point must be an end point of one of these strings.  

\begin{figure}[pt]
\centering
\includegraphics[width=\textwidth]{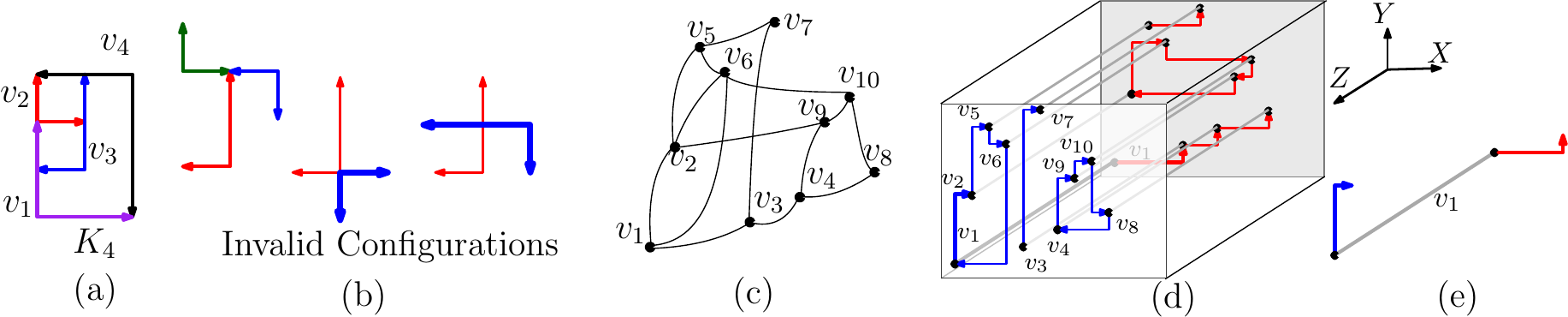}
\caption{(a)   A string contact representations of $K_4$. We use arrows to mark the two ends of each string. (b)  Some invalid configurations. (c) A graph $G$. (d) A string contact representation of $G$. (e) The string corresponding to vertex $v_1$.}
\label{fig:pcr}
\end{figure}

A \emph{string contact representation} of a graph $G$ is a contact system $\Psi$ of  axis-aligned strings in $\mathbb{R}^3$, where each vertex is represented as a distinct string in $\Psi$, no three strings meet at a point,  and two strings touch if and only if the corresponding vertices are adjacent in $G$, e.g., see Fig.~\ref{fig:pcr}. The reason we forbid more than two strings to meet at a point is to avoid degenerate cycles. By a $B_k$-string we denote a string with at most $k$ bends. The \emph{complexity of $\Psi$} is the minimum integer $r$ such that every string in $\Psi$ is a $B_r$-string. 
 We discuss the related research  in two broad categories, first in 2D and then in 3D. 

\textbf{Two Dimensions:} Contact representations date back to the 1930's, when Koebe~\cite{Koebe} proved that every planar graph can be represented as a contact system of circles in the Euclidean plane. A rich body of literature examines contact representation of planar graphs in $\mathbb{R}^2$ using axis-aligned rectangles~\cite{BhaskerS88,KantH93,KozminskiK85} and polygons of bounded size~\cite{AlamEGKP14,AlamBFGKK13,DuncanGHKK12}. In 1994, de Fraysseix et al.~\cite{fraysseix_T} proved that every planar graph admits a triangle contact representation, and showed how to transform it into a contact system of $T$- or $Y$-shaped objects. Subsequent studies involve constructing contact representations with simpler shapes such as axis-aligned segments ($B_0$-strings), and  axis-aligned $L$ shapes ($B_1$-strings). Not all planar graphs can be represented using these shapes. Planar bipartite graphs~\cite{CzyzowiczKU98} and planar Laman graphs~\cite{KobourovUV13} can be represented using $B_0$-strings  and $B_1$-strings, respectively. Recently, Aerts and Felsner~\cite{f2015} examined contact representations of planar graphs using general strings. 
 \emph{Intersection representation} (or, \emph{$B_kVPG$-representation}, where all the strings are $B_k$-strings) is another related concept, where the strings  are allowed to cross. Graphs with $B_rVPG$-representations  
  do not necessarily have contact representations with $B_r$-strings (e.g., $K_{3,3}$ with $r=0$ has a $B_rVPG$-representation, but does not have a contract representation 
 with $B_0$-strings,  as shown in Section~\ref{lb}). 
  We refer the reader to~\cite{DBLP:journals/dcg/ChalopinGO10,DBLP:journals/jgaa/ChaplickU13,BiedlD15,FelsnerKMU16} for further background on $B_kVPG$-representation of planar and non-planar graphs.



\textbf{Three Dimensions:} Contact representation in three dimensions has been examined using axis-aligned boxes~\cite{AdigaC12,Thomassen86} and polyhedra~\cite{AlamEKPTU15}. In the context of geometric thickness, Duncan et al.~\cite{DuncanEK04} proved that the edges of every graph $G=(V,E)$ with maximum degree 4 can be  partitioned into two planar graphs $G_1=(V,E_1)$ and $G_2=(V,E_2)$, each   consists of a set of paths and cycles. They showed that $G_1$ and $G_2$ can be drawn simultaneously on two planar layers with vertices at the same location and edges as $1$-bend polygonal paths. Such a drawing can easily  be transformed into a contact representation of $B_4$-strings (e.g., see Figs.~\ref{fig:pcr}(c)--(e), details are in Appendix~\ref{sec:4-string}), and hence, every graph with maximum degree 4 has a string contact representation with complexity 4.

Not much is known about string contact representations with low complexity strings in $\mathbb{R}^3$. The challenge is vivid even in extremely restricted scenarios: Given a graph along with a label $East$, $West$, $North$, $South$, $Up$, or $Down$,  the problem of computing a no-bend orthogonal drawing in $\mathbb{R}^3$  
 respecting the label constraints has lead to significant research outcomes~\cite{DBLP:journals/dcg/BattistaKLLW12,DBLP:conf/gd/GiacomoLP02}, even for apparently simple structures such as paths, cycles, or graphs with at most three cycles. 
 
Orthogonal drawings can sometimes be turned into string contact representations. Consider a graph $G$ that admits an edge orientation such that the outdegree of every vertex is at most two (e.g., a $(2,0)$-sparse graph~\cite{DBLP:journals/corr/AlamE0KPSU15}). String contact representation with bend complexity 14 can easily be computed for such graphs, e.g., see Fig.~\ref{general} in Appendix~\ref{app:general}. Specifically, if $G$ admits a $k$-bend orthogonal drawing, then the drawing can be turned into a string contact representation with  complexity $(2k+1)$ by forming for each vertex, a string that consists of the outward edges.   However, computing orthogonal drawings with low number of bends per edge is a challenging problem~\cite{HandbookOrtho}. 
 To the best of our knowledge, the complexity of deciding whether a graph has an 
 orthogonal drawing in $\mathbb{R}^3$ with one bend per edge is open.

\textbf{Contributions.} 
 We present significant progress in characterizing  graphs (possibly non-planar)  that admit string contact representations in $\mathbb{R}^3$.   
 We  prove that  every Hamiltonian and triangle-free graph $G$ has a contact representation, where all the strings but one are $B_3$-strings. Using a slightly different construction we show  that every bipartite 3-regular graph admits a string contact representation with  complexity 2. Most interestingly, we prove that every 3-regular graph that is Hamiltonian and bipartite  has a contact representation, where all the strings but one are $B_1$-strings (i.e., $L$-shapes). This construction relies on a deep understanding of the  graph structure and the geometry of $L$-contact systems. All proofs are constructive, and  can be carried out in polynomial time. 

In contrast, we prove (by a simple counting argument) that  5-regular graphs  do not have  string contact representations, even with arbitrarily large  complexity. Moreover, the 4-regular graph $K_5$ (resp., the 3-regular graph $K_{3,3}$)  cannot be represented using $B_1$ strings (resp., $B_0$-strings). 


\section{Preliminaries}
\label{pre}
 

 We assume familiarity with basic graph-theoretical notation. A \emph{straight-line drawing} of a graph $G$ is a drawing in $\mathbb{R}^d$, where each vertex of $G$ is mapped to a point, and each edge of $G$ is mapped to a straight line segment between its end vertices. The \emph{geometric thickness} of $G$ is the minimum integer $\theta$ such that $G$ admits a straight-line drawing $\Psi$ in $\mathbb{R}^2$ and a partition of its edges into $\theta$ sets, where no two edges of the same set cross (except possibly at their common end points), e.g., see Fig.~\ref{even}(a). 

Let $\Psi$ be a contact representation of a graph $G$, where all the strings are  \emph{$L$-shapes}, i.e., $B_1$-strings. For any vertex $v$ of $G$, we denote by $L_v$ the $L$-shape corresponding to $v$ in $\Psi$. Let  $a,o,b$ be the polygonal path representing $L_v$. We refer to $o$ as the \emph{joint} of $L_v$, and the line segments $ao$ and $bo$ as the \emph{hands} of $L(v)$. The points $a$ and $b$ are called the \emph{peaks} of $ao$ and $bo$, respectively. By $\Pi_{xy},\Pi_{yz},\Pi_{xz}$ we denote the family of  planes parallel to the $XY, YZ,$ or $XZ$-plane, respectively.   By $\Pi(t)$ we denote the plane $z=t$.  	For $Q\in\{X,Y,Z\}$, a \emph{$(+Q)$-arrow} is a   directed straight line segment, which is aligned to the $Q$-axis and directed to the positive $Q$-axis. Define a $(-Q)$-arrow symmetrically. A  \emph{$Q$-line} (resp., \emph{segment}) is a straight line (resp., segment) parallel to the $Q$-axis. Throughout the paper the terms `horizontal' and `vertical' denote alignment with $X$ and $Y$-axis, respectively.

\begin{figure}[pt]
\centering
\includegraphics[width=.6\textwidth]{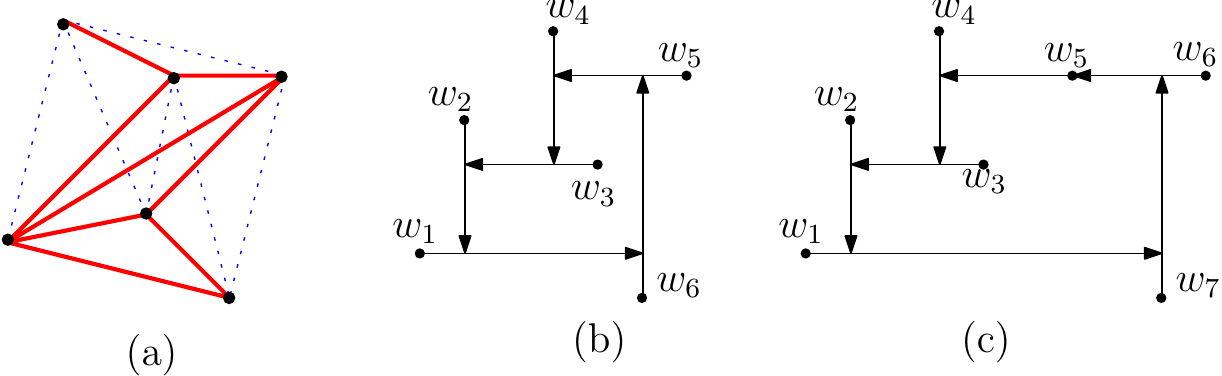}
\caption{(a) A geometric thickness two representation. (b)--(d) Construction of a staircase representation. }
\label{even}
\end{figure}

Let $\mathcal{W}=(w_1,w_2\ldots,w_k)$ be a cycle of length $k\ge 4$. We define a \emph{staircase representation} of $\mathcal{W}$ as a contact system of directed line segments (\emph{arrows}, or degenerate $L$-shapes), 
 as illustrated in Figs.~\ref{even}(b)--(c). If $k$ is even, then the origins of the $L$-shapes are in  \emph{general position}, i.e., no two of them  have the same $x$ or $y$-coordinate. Otherwise, all the origins except  for the two topmost horizontal arrows are in general position. Appendix~\ref{defn} includes a formal definition.

\section{String Contact Representations of Complexity  2 or 3}
\label{string-contact}



\begin{theorem}
\label{4-reg-gu} 
Every  triangle-free Hamiltonian graph $G$ with maximum degree four has a contact representation where all strings but one are  $B_3$-strings. 
\end{theorem}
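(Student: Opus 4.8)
\medskip
\noindent\textbf{Proof idea.}
The plan is to route the Hamiltonian cycle and the remaining edges in essentially perpendicular fashions. Fix a Hamiltonian cycle $C=(v_1,\dots,v_n)$ and relabel the vertices along it. Since $C$ already uses two edges at every vertex and $G$ has maximum degree four, the \emph{chords} (the edges of $G$ not on $C$) induce a subgraph $H$ with $\deg_H(v)\le 2$ for every $v$, so $H$ is a disjoint union of paths and cycles. Triangle-freeness, together with the fact that $C$ contributes the edges $v_iv_{i+1}$, forces two further properties that the construction exploits: no chord joins vertices at cyclic distance $1$ or $2$ on $C$ (so the two ``ends'' of every chord are separated by at least two other cycle-vertices), and $H$ contains no triangle, so every cycle-component of $H$ has length at least $4$.

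First I would realize $C$ in the plane $\Pi(0)=\{z=0\}$ by a staircase representation (Section~\ref{pre}, Fig.~\ref{even}): each $v_i$ becomes an axis-aligned $B_0$-string $S_i\subset\Pi(0)$, consecutive segments touch, the contact system realizes exactly $C$, and — after extending each $S_i$ by a small amount past the point where $S_{i-1}$ meets it — every $S_i$ has a \emph{free} endpoint $q_i$ from which a polygonal ``flap'' can be grown into the open halfspace $z>0$ without creating a new contact in $\Pi(0)$. Likewise I would realize each component $K$ of $H$ by a (closed or open) staircase representation in its own private plane $\Pi(z_K)$ with $z_K>0$, giving every vertex $v_i\in V(K)$ a second $B_0$-string $T_i\subset\Pi(z_K)$, again with a free endpoint; this is possible for the cyclic components precisely because, being triangle-free, they have length at least $4$.

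The strings of $G$ are then obtained by joining, for each vertex $v_i$ lying on some chord-component $K$, the piece $S_i$ to the piece $T_i$ by a monotone ``bridge'' that leaves $\Pi(0)$ at $q_i$, climbs in the $+Z$ direction, and travels horizontally to the free endpoint of $T_i$; a vertex on no chord keeps just $S_i$. The contacts along $C$ are carried by the $S_i$'s in $\Pi(0)$ and the contacts along each chord by the corresponding $T_i$'s in $\Pi(z_K)$, so the resulting system realizes $G$. Each $S_i$ and each $T_i$ contributes no bend, and a bridge can be realized with at most three bends — turn up at $q_i$, then two horizontal turns landing on $T_i$ with the final segment collinear with $T_i$ (choosing the order of the two horizontal legs according to whether $T_i$ is an $X$- or a $Y$-segment). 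Hence every string is a $B_3$-string, except a single designated string that absorbs the unavoidable overflow coming from closing the last staircase and from reconciling all the bridges with the stacked planes $\Pi(z_K)$ — this is exactly the ``all but one'' in the statement. Pairwise distinct altitudes $z_K$, the spacing between chord ends guaranteed by triangle-freeness, and the general position of the staircases let all bridges and flaps be nested so that no two of them meet and each meets only its intended partner.

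\medskip
\noindent The step I expect to be the main obstacle is precisely this last coordination together with the bend budget: the bridges must all climb into and terminate in the stack of planes $\Pi(z_K)$ without colliding with those planes' staircases or with one another, and forcing this to succeed typically costs one string a fourth bend (equivalently, forces one staircase to be realized less cleanly); proving that the damage can be confined to a single string is the crux. The triangle-free hypothesis enters only through the two structural facts highlighted above — enough room between the attachment points of every chord, and chord-cycles long enough to admit staircase representations.
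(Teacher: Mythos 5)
Your high-level plan inverts the paper's assignment of the two edge sets to the two ``directions'': the paper lays out the chord-components of $H$ as staircases in the base plane $\Pi(0)$ (diagonally separated along a line of slope $+1$) and realizes each Hamiltonian edge $(v_j,v_{j+1})$ in its \emph{own} private plane $\Pi(j)$ by a single two-bend extension of a vertical arrow, whereas you put the Hamiltonian cycle in $\Pi(0)$ and each chord-component $K$ in a private plane $\Pi(z_K)$. This inversion creates a genuine gap at exactly the step you flag as the crux. In your scheme all $|V(K)|$ bridges of a component must terminate in the \emph{same} plane $\Pi(z_K)$: their arrival points are the vertical lifts of the free endpoints $q_i$, which sit along the $xy$-monotone Hamiltonian staircase in the order dictated by the Hamiltonian cycle, while their targets are the free endpoints of the $T_i$'s, arranged according to the cyclic order of $K$ --- two orders that are unrelated in general. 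You must join these by pairwise non-crossing one-turn paths that also avoid every $T_j$ with $j\ne i$; a one-turn path between two points has only two possible routes, and for mismatched orders such a simultaneous non-crossing routing need not exist. Designating one string as the exceptional non-$B_3$ string does not repair this, because the obstruction can arise independently in several component planes at once. A secondary unaddressed point: each vertical bridge segment rising from $\Pi(0)$ to $\Pi(z_K)$ pierces every intermediate plane $\Pi(z_{K'})$ and must miss the staircase of $K'$ and all horizontal bridge legs living there, which requires an explicit general-position argument rather than merely distinct altitudes.

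The paper's assignment is engineered precisely to avoid this: each plane $\Pi(j)$ hosts exactly \emph{one} L-shaped routing task (extend $r'_j$ to reach $r'_{j+1}$), whose only obstacles are the sparse set of puncture points where the remaining vertical arrows cross $\Pi(j)$, and the sparseness property plus the case analysis in Appendix~A shows that at least one of the two candidate L-routes is always free. To salvage your variant you would need an analogue of that argument strong enough to route all of a component's bridges simultaneously in one plane with a one-bend budget each --- a substantially stronger claim that your proposal asserts but does not establish.
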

\begin{proof}[Proof Outline]
Let $C=(v_1,\ldots,v_n)$ be a Hamiltonian cycle of $G$. Let $H$ be the graph obtained after removing the  Hamiltonian edges from $G$. 
 Observe that $H$ is a union of vertex disjoint cycles and paths. We transform each path $P=(w_1\ldots,w_k)$ of $H$ into a cycle by adding a subpath of one or two dummy vertices between
 ($w_1$ and $w_k$) depending on whether $P$ has one or more vertices. 


 Let $Q_1,\ldots, Q_k$ be the cycles in $H$. For each cycle $Q_i$, where $1\le i\le k$, we construct a staircase representation $\Psi_i$ of $Q_i$ on $\Pi(0)$. If $Q_i$ is a cycle with odd number of vertices, then we construct the staircase representation such that the leftmost segment among the topmost horizontal segments corresponds to the vertex with the lowest index in $Q_i$. For example, see the topmost staircase of Fig.~\ref{3-string-short}(a). We then place  the staircase representations diagonally along a line with slope $+1$. We ensure that the horizontal and vertical slabs containing $\Psi_i$ do not intersect $\Psi_j$, where $1\le j(\not=i)\le k$. We  refer to this representation as $\Psi_H$.  
 
Consider now the edges of the Hamiltonian cycle $C=(v_1,\ldots,v_n)$. Note that each vertex $v_j$, where $1\le j\le n$, is represented using an axis-aligned arrow $r_j$ in $\Psi_H$. 
 For each $r_j$,  we construct a $(+Z)$-arrow $r'_j$ of length $j$ that starts at the origin of $r_j$. Consequently, the plane  $\Pi(j)$ intersects only those arrows $r'_q$, where $j\le q\le n$. Let $I_j$  be the set of intersection points on $\Pi(j)$. By construction $\Psi_H$ satisfies the following sparseness property: 
 Any vertical (resp., horizontal) line on $\Pi(j)$ contains at most one point (resp., two points) from $I_j$. For every pair of points $p,q$ that belong to $I_j$ and lie on the same horizontal line, the corresponding vertices are adjacent in $H$, and belong to a distinct cycle with odd number of vertices in $H$. 

\begin{figure}[pt]
\centering
\includegraphics[width=\textwidth]{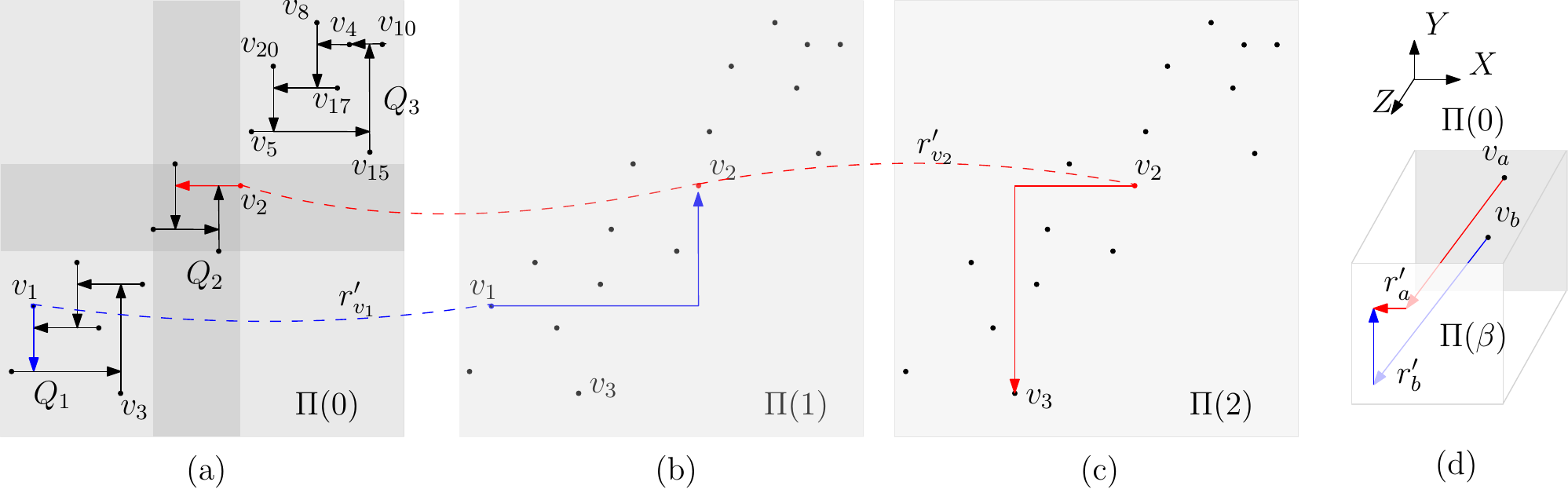}
\caption{ Illustration for the proof of Theorem~\ref{4-reg-gu}: (a) $\Psi_H$, (b)--(c) Extensions of $r'_{v_1}$ and $r'_{v_2}$, where the dashed lines are $(+Z)$-lines. (d) Illustration for Theorem~\ref{3-reg-b}. 
}
\label{3-string-short}
\end{figure}

For each $j$ from $1$ to $n-1$, we realize the edge  $(v_j,v_{j+1})$ by extending $r'_j$ on $\Pi(j)$. Note that it suffices to use two bends to route $r'_j$ to touch $r'_{j+1}$, where one bend is to enter $\Pi(j)$ and the other is to reach $r'_{j+1}$.  Figs.~\ref{3-string-short}(b)--(c) illustrate  the extension of $r'_j$.  We use the sparseness property of $\Psi_H$ to show that  can find such an extension of $r'_j$ without introducing any crossing. Details are in Appendix~\ref{sec:4-string}. Finally, it is straightforward to realize $(v_1,v_n)$ by routing $r'_n$ on $\Pi(n)$ using two bends, and then moving downward to touch $r_1$. Therefore, the string representing $v_n$  is a $B_4$-string.
\end{proof}




\begin{theorem}
\label{3-reg-b} 
Every 3-regular bipartite graph has a string contact representation with  complexity 2. 
\end{theorem}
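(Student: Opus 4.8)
The plan is to reuse the staircase machinery from the proof of Theorem~\ref{4-reg-gu}, but to replace the Hamiltonian cycle by the much cleaner structure available in a bipartite $3$-regular graph, so that every string needs only two bends. First I would observe that a bipartite $3$-regular graph $G$ has a perfect matching $M$ (by Hall's theorem, applied to a regular bipartite graph). The spanning subgraph $G-M$ is then $2$-regular and bipartite, hence a disjoint union of \emph{even} cycles $Q_1,\dots,Q_k$, each of length at least $4$.

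Next I would build $\Psi_H$ exactly as in Theorem~\ref{4-reg-gu}: construct a staircase representation of each $Q_i$ on $\Pi(0)$ and lay the $k$ staircases out diagonally, so that the horizontal and vertical slabs of distinct staircases are pairwise disjoint. Because every $Q_i$ is even, all of the arrow origins land in general position, i.e., no two of them share an $x$- or a $y$-coordinate. At this stage every vertex is a $B_0$-string (an axis-aligned arrow $r_v$ on $\Pi(0)$), all edges of $G-M$ are already realized, and the origins of the staircase arrows are free of contacts, so I may erect from each origin a $(+Z)$-arrow $r'_v$ as in Theorem~\ref{4-reg-gu}.

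The matching edges are then handled one private height at a time. Enumerate $M=\{e_1,\dots,e_{n/2}\}$ with $e_i=(u_i,v_i)$. For edge $e_i$ I would give both $r'_{u_i}$ and $r'_{v_i}$ length $i$, so their tops are the points $p=(x_{u_i},y_{u_i},i)$ and $q=(x_{v_i},y_{v_i},i)$ on the plane $\Pi(i)$; I would then extend $r'_{u_i}$ by a single $X$-segment from $p$ to the point $(x_{v_i},y_{u_i},i)$, and extend $r'_{v_i}$ by a single $Y$-segment from $q$ to the same point $(x_{v_i},y_{u_i},i)$. The two extensions share exactly this one endpoint, which realizes $e_i$, and each extended string is a path of three pairwise non-collinear segments, i.e., a $B_2$-string. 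Since every vertex lies in exactly one matching edge, every string ends up a $B_2$-string, so the complexity is exactly $2$.

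The last thing to verify — and the step I expect to be the main obstacle — is that no accidental contact is created and no point is covered by three strings. The staircase representations already realize precisely the edges of $G-M$; distinct planes $\Pi(i)$ and $\Pi(j)$ are disjoint, so extensions belonging to different matching edges never interact; and the only objects crossing a fixed $\Pi(i)$ besides the two extension segments of $e_i$ are the single transversal points $(x_w,y_w,i)$ of the taller arrows $r'_w$ (those whose matching edge has index $>i$). Because the $X$-segment of $e_i$ lies on the line $y=y_{u_i}$ and its $Y$-segment on the line $x=x_{v_i}$, the general position of the origins forces all those transversal points off both segments, and in particular the shared endpoint $(x_{v_i},y_{u_i},i)$ can be met only by the arrows of $u_i$ and $v_i$ themselves. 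This is exactly where the even-cycle hypothesis, which supplies the general-position property, is essential; everything else is a routine check, and the whole construction is plainly polynomial.
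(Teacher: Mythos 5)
Your proposal is correct and follows essentially the same route as the paper: perfect matching via Hall's theorem, even cycles of $G-M$ drawn as staircases in general position, and one private plane per matching edge on which the two $(+Z)$-arrows are joined with one extra bend each. The only (immaterial) difference is that you index the private planes by an enumeration of $M$ while the paper uses $\min(a,b)$ over the vertex indices of each matching edge; your explicit description of the two extension segments meeting at $(x_{v_i},y_{u_i},i)$ just spells out what the paper delegates to a figure.
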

\begin{proof}
By Hall's condition~\cite{Hall}, $G$ contains a perfect matching   $M\subset E$.  Let $H$ be the graph obtained by removing the edges of $M$ from $G$. Since $H$ is 2-regular, $H$ is a union of disjoint cycles. 
We now construct a contact representation $\Psi_H$ of $H$   in the same way as in the proof of Theorem~\ref{4-reg-gu}.  However, while constructing the $(+Z)$-arrows, we take the matching into consideration. For each edge $(v_a,v_b)\in M$,   we set the length of $r'_a$ and $r'_b$ to $\beta = \min(a,b)$.  Consequently, we can route both $r'_a$ and $r'_b$ to touch each other on $\Pi(\beta)$, e.g., see Fig.~\ref{3-string-short}(d).   Since $G$ is bipartite, $H$ contains only cycles of even number of vertices. Consequently, the origins of the $(+Z)$-arrows are in general position, and hence the extensions of $r'_a$ and $r'_b$ do not create any unnecessary adjacency.
\end{proof}   

\section{$\mathbf{\it L}$-Contact Representations}
\label{sec:lcontact}
The techniques used in Section~\ref{string-contact} inherently require strings with two or more bends. 
 In this section we restrict our attention to contact representations of $B_1$-strings ($L$-shapes). We prove that every 3-regular Hamiltonian bipartite graph  $G$  has a contact representation where all strings but one are  $B_1$-strings. Appendix~\ref{app:last} illustrates a walkthrough example.

\textbf{Technical Details:} 
 Let $P=(v_1,\ldots,v_n)$ be a Hamiltonian path in $G$.  Let $G'$ be the graph obtained after deleting the edge $(v_1,v_n)$ from $G$. We first construct an $L$-contact system for $G'$, and then extend this contact system to compute the representation for $G$.  

Color all the vertices of $G'$, as follows:  Order the vertices from left to right in the order they appear on $P$ (in the increasing  order of indices). For each non-Hamiltonian edge $(v_i,v_j)$, where $j>i+1$, color $v_i$ and $v_j$ with red and blue colors, respectively. Since each vertex is incident to one non-Hamiltonian edge, all the vertices are now colored. This vertex coloring creates \emph{red} and \emph{blue chains} (maximal subpath containing vertices of the same color) on $P$. Let $\C_1,\C_2,\ldots, \C_k$ be all the red chains in $P$ in the left to right order, e.g., see   Fig.~\ref{decomposition}(a). For each $\C_i$, where $1\le i\le k$, there is a blue chain $\C'_i$ that follows $\C_i$. We refer to $(\C_i,\C'_i)$ as a \emph{chain pair}. Let $\C_i$ be the red chain $v_j,\ldots,v_k$, where $1\le j,k \le n$. Since $\C_i$ is maximal, the vertex $v_{j-1}$ and  $v_{k+1}$ (if they exist) are blue vertices. We call $v_{j-1}$ and  $v_{k+1}$ the  \emph{head} and \emph{tail vertex} of $\C_i$, e.g., see  Fig.~\ref{decomposition}(b).  The set $blue(\C_i)$ consists of all blue vertices of $G'$ (following $v_k$ on $P$) that are incident to the vertices of $\C_i$. For example, in Fig.~\ref{decomposition}, $blue(\C_2)$ contains 4 blue vertices. For the $j$th blue vertex $w$ (from left) on $\C'_1$, define  $\alpha(w)$ to be $\lfloor\frac{j}{2}\rfloor+1$, e.g., see  Fig.~\ref{decomposition}(c). For $i>1$, define $\alpha(w)$ (for the $j$th vertex $w$ on $\C'_i$) to be $\delta_{i-1} + \lfloor\frac{j}{2}\rfloor+1$, e.g., see  Fig.~\ref{decomposition}(c), where $\delta_{i-1}$ is the maximum $\alpha$ value (-1, if the maximum is even and unique) in $\C'_{i-1}$.
 %
 Finally, define $G'_i$ to be the graph induced by the edges of $\C_1,\C'_1,\ldots,\C_i,\C'_i$, along with the edges that connect blue vertices of $G'$ to these chains, e.g., see  Fig.~\ref{decomposition}(d). 
 A vertex $v$ is \emph{unsaturated} in $G'_i$, if $v$ has a neighbor in $G'$ that does not belong to $G'_i$. 
 Otherwise, $v$ is a \emph{saturated} vertex.  Every blue vertex $w$ in $G'$ must be incident to a non-Hamiltonian edge $(u,w)$ such that $u$ is red and appears before  $w$ on $P$. We call $u$ the \emph{red parent} of $w$, and $w$ the \emph{blue child} of $u$.

\textbf{Idea:} We construct the $L$-contact representation of $G'$ incrementally, starting from $G'_1$, and then at  the $i$th step, adding the chain pair $(\C_i,\C'_i)$ and the edges that connects $blue(\C_i)$ to $\C_i$.
 In other words, after the $i$th step, we will have an $L$-contact representation $\Psi'_i$ of $G'_i$. 
 For each $i$ from $1$ to $k$, we construct $\Psi'_i$   maintaining some drawing invariants. 
 
 In brief, we will draw the red chain $\C_1$ as a contact representation of arrows
  (degenerate $L$-shapes), where the arrows will be arranged along an $xy$-monotone polygonal path 
  lying on plane $\Pi(1)$, e.g., see Fig.~\ref{Gamma1}(a).
  For each red vertex $v$ and non-Hamiltonian edge
  $(v,w)$, we draw the other hand of $L_v$ as a $(+Z)$-arrow that stops at $\Pi(\alpha(w))$.
  The intuition is that the joint (of $L$-shapes) of the blue vertices   will be drawn on 
  the plane defined by their $\alpha(\cdot)$ values.  
  Since every  blue vertex $w$ in $\C'_1$ and $blue(\C_1)$  has a red parent in $v$  in $\C_1$, 
  we draw  $L_w$ initially as a point (degenerate $L$-shapes) at the peak of $L_v$, e.g., see Fig.~\ref{Gamma1}(c).
  Thus to complete the drawing of $\Psi'_1$, we only need to realize the edges of $\C'_1$,
  which is done by extending the degenerate blue $L$-shapes. The  $\alpha(\cdot)$ values
  will play a crucial role to ensure that the blue $L$-shapes follow
  some increasing $Z$-direction, and thus can be drawn without introducing any unnecessary
   adjacency.
   
 For $i>1$, the there are two key  differences between  $(\C_i,\C'_i)$ and $(C_1,C'_1)$. First, $\C_i$ 
  has a head vertex $v_h$, which is already drawn in $\Psi'_{i-1}$. Second, $\C'_i$ and $blue(\C_i)$
  may contain red parents that do not belong to $\C_i$, and thus already drawn in $\Psi'_{i-1}$.  
  The most favorable scenario would be to construct a drawing of $(\C_i,\C'_i)$ and the
  edges connecting them to  $blue(\C_i)$ independently (following the drawing method of $\Psi'_1$),  
  and then insert it into $\Psi'_{i-1}$ to obtain the drawing $\Psi'_i$. 
  If the red parents of all the vertices in $\C'_i$ and $blue(\C_i)$ belong to $\C_i$, then
  we can easily construct  $\Psi'_i$ using the above idea.  Otherwise,  merging the drawings 
  properly 
 seems challenging. However, using the drawing invariants we can find certain properties in 
  $\Psi'_{i-1}$ that makes such a merging possible. 
   
\textbf{Drawing Details:} For each $i$ from $1$ to $k$, we construct $\Psi'_i$   maintaining the following drawing invariants. 

\begin{figure}[pt]
\centering
\includegraphics[width=.8\textwidth]{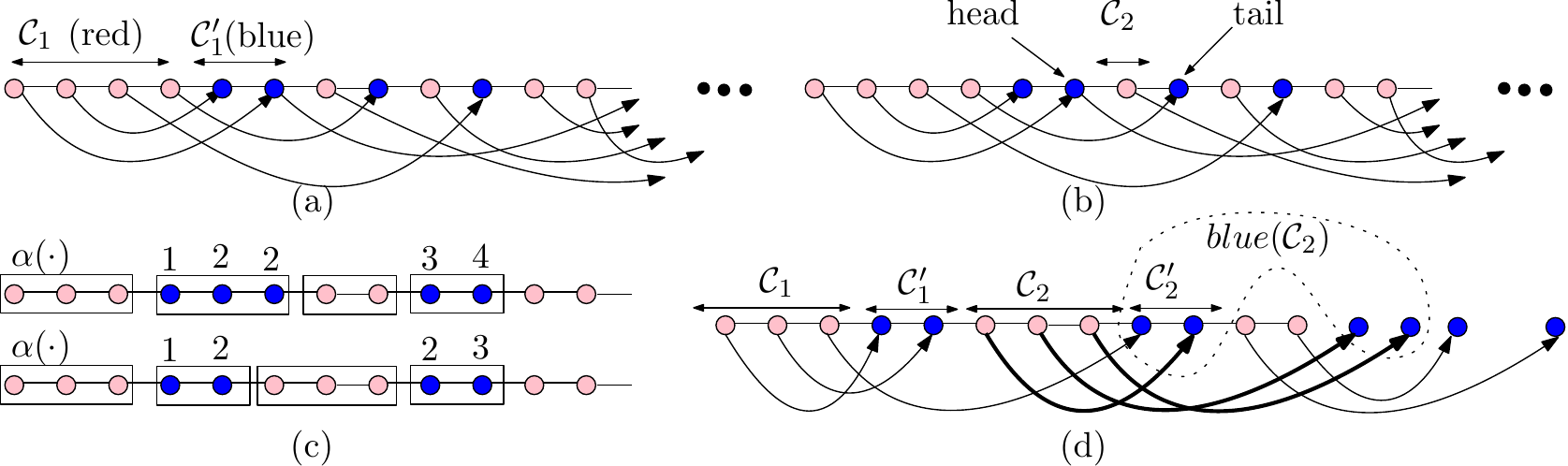}
\caption{(a) $\C_i$, and $\C'_i$. (b) The head and tail of $\C_2$.  Illustration for (c) $\alpha(\cdot)$, where $\delta_{1} = 2$ (top) and $\delta_{1} = 1$  (bottom), respectively. (d) $G'_2$.}
\label{decomposition}
\end{figure}

\smallskip 
\begin{compactenum}
\item [$I_1$.] $\Psi'_i$ is an $L$-contact representation of $G'_i$. 

\item[$I_2$.] Every blue vertex $v$ of degree one in $G'_i$ is drawn as a point
  on $\Pi(\alpha(v))$.  The projection of these points  on $\Pi_{xy}$  are in general position. 

\item [$I_3$.] Let $w_j$ be the $j$th blue vertex on $\C'_i$ (from left to right). 
 If $j$ is odd, 
 then  $\Pi(\alpha(w_{j+1}))$ contains only one 
 point (peak) of $L_{w_j}$, where the rest of $L_{w_j}$ lies below  $\Pi(\alpha(w_{j+1}))$.  
 Otherwise,  $\alpha(w_j) = \alpha(w_{j+1})$, 
 and $\Pi(\alpha(w_{j}))$ 
 contains entire $L_{w_j}$. Moreover, $L_{w_j}$ is non-degenerate, and 
 the $Y$-line through the joint of 
 $w_{j+1}$ must intersect (the extension of) the horizontal hand of $L_{w_j}$.  
 %
\end{compactenum}
\smallskip 

\noindent
For simplicity we do not introduce drawing invariants for the red $L$-shapes.
 Their drawing will be obvious from the context.
  Informally, for a red chain $\C_i$,
  one hand of the corresponding $L$-shapes will be drawn on the plane $\Pi(1)$ (if $i=1$) or 
  on the plane determined by the $\alpha(\cdot)$ value of its head (if $i>1$). 
  The remaining hand (if needed) is drawn as a $(+Z)$-arrow that stops at 
  some plane determined by the $\alpha(\cdot)$ value of its blue child.
  

\subsection{Construction of $\Psi'_1$} 
Let $\C_1$ be the red chain $v_1,v_2,\ldots,v_q$. 
 The tail $v_t$ of $\C_1$, which is blue, is adjacent to
 exactly two vertices of $\C_1$: One is the red vertex $v_q$, and the other is 
 its red parent $v_r$. While constructing $\Psi'_1$,  we first realize 
 the red-red adjacencies, then  the red-blue  (equivalently, blue-red) adjacencies,
 and finally, the blue-blue  adjacencies of $G'_1$. 

\textbf{Red-Red Adjacencies:} Red-red adjacencies correspond to Hamiltonian edges, and thus appear in $\C_1$. 
 To realize these adjacencies, we draw the $L$-shapes of the vertices of $\C_1$ using arrows
 that lie along an $xy$-monotone path on $\Pi(1)$, as illustrated in Fig.~\ref{Gamma1}(a).
 
 In brief, we ensure that the arrows are horizontal and vertical alternatively,
  and all the joints (origins) 	are in general position. We then extend the 
   joint of $L_{v_r}$ such that  the $Y$-line through it  intersects $L_{v_q}$, e.g., see  Fig.~\ref{Gamma1}(b).    All these conditions are straightforward to achieve.



   
\begin{figure}[pt]
\centering
\includegraphics[width=\textwidth]{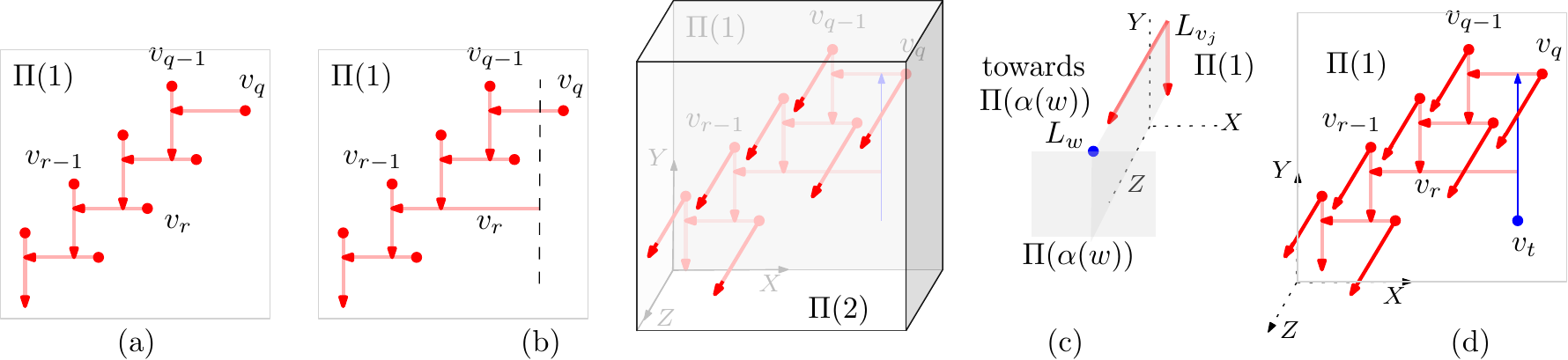}
\caption{Illustration for the drawing of $\Psi'_1$. (a)  Initial drawing. (b) Construction for $L_{v_r}$,  (c)   The $(+Z)$-arrow of $L_{v_j}$. (d) $L_{v_t}$.}
\label{Gamma1}
\end{figure}

\textbf{Red-Blue Adjacencies:}  For each red vertex $v$ (except for $v_r$), we create a $(+Z)$-arrow  (the other 
 hand of $L_v$) that stops at $\Pi(\alpha(w))$, where $w$ is the blue child of $v$. 
 We then draw $w$ as a point at the peak of the arrow, e.g., see  Fig.~\ref{Gamma1}(c). 
 We will refer to such an initial point representation of $w$ as the \emph{initiator of $w$},
 and denote the point as $init(w)$. 
 Although the joint of such a point representation of $w$ coincides with $init(w)$, 
  it is important to note that we may later extend the point representation of $w$ 
 to an arrow or a full $L$-shape, and the joint of the new $L$-representation 
 does not necessarily coincide with $init(w)$. 
 
 The only remaining red-blue adjacencies are $(v_q,v_t)$
  and $(v_r,v_t)$. Recall that the $Y$-line through the joint of $L_{v_r}$
  intersects $L_{v_q}$. Therefore, we can draw a $(+Y)$-arrow representing 
 $L_{v_t}$ that touches both $L_{v_r}$ and $L_{v_q}$, e.g., see  Fig.~\ref{Gamma1}(d).



\textbf{Blue-Blue Adjacencies:} Let $w_1(=v_t),\ldots,w_k$ be the blue chain $\C'_1$. If $\C'_1$ does not include all the blue vertices of $G'$, then let $w_{k+1}$ be the first blue vertex following $w_k$ on $P$. Note that  $w_k$ and $w_{k+1}$ are the head and tail of  $\C_2$, respectively, e.g., see Fig.~\ref{decomposition}(b). On the other hand, if  $\C'_1$ contains all the blue vertices of $G'$, then consider a dummy vertex $w_{k+1}$. 

If $k=1$, then there is no blue-blue adjacency to be realized. We only construct a $(+Z)$-arrow that starts at the   $init(w_1)$ and  stops at $\Pi(\alpha(w_2))$. This satisfies the invariant  $I_3$ (since $w_1$ is at odd position on $\C'_1$). 

\begin{figure}[pt]
\centering
\includegraphics[width=\textwidth]{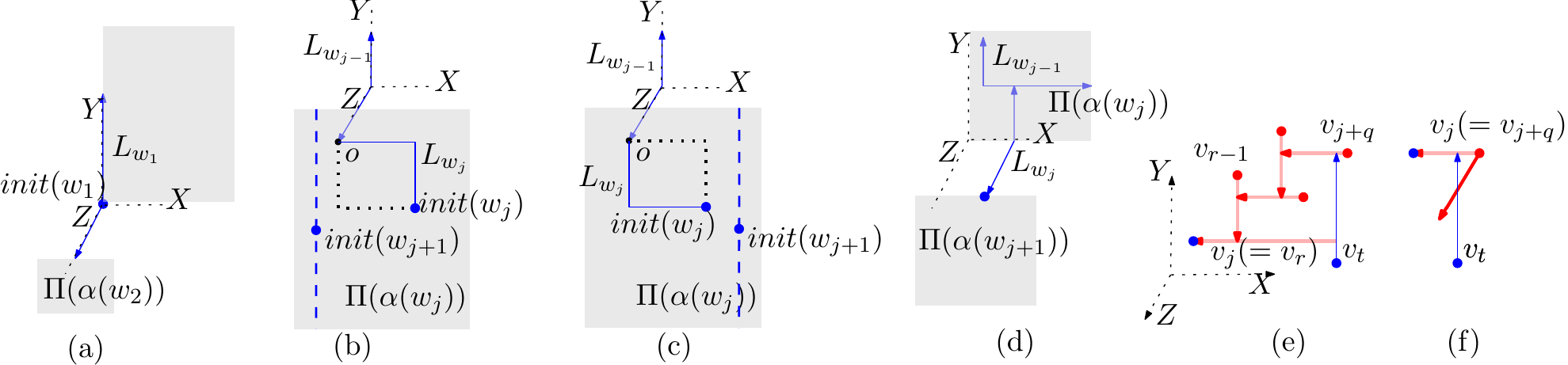}
\caption{Illustration for blue-blue adjacencies.}
\label{invariants}
\end{figure}

If $k>1$, then we modify  $L_{w_j}$, where $1\le j\le k$,  to realize the blue-blue adjacencies.  Observe that each  $L_{w_j}$, except $L_{w_1}$, is  currently represented as a point on $\Pi(\alpha(w_j))$. 
 We first construct a $(+Z)$-arrow which starts at $init(w_1)$ and stops at $\Pi(\alpha(w_2))$,
 i.e., $L_{w_1}$ satisfies the invariant  $I_3$, e.g., see Fig.~\ref{invariants}(a). 
 Consider now the modification for $L_{w_j}$, where $j>2$. Assume that 
 the $L$-shapes $L_{w_1}, \ldots,L_{w_{j-1}}$ already satisfy Invariant $I_3$.   

 If $j$ is even, then $(j-1)$ is odd and by definition of $\alpha(\cdot)$,
 $\alpha(w_{j-1})<\alpha(w_j)$. 
 By Invariant $I_3$, $L_{w_{j-1}}$ has only one point (a peak)
 $o$ on $\Pi(\alpha(w_j))$.
 We now have two options to create $L_{w_j}$ connecting $o$
 and $init(w_j)$. One of these two options  would satisfy Invariant $I_3$, e.g., see Figs.~\ref{invariants}(b)--(c).

If $j$ is odd, then by the definition of $\alpha(\cdot)$, we have $\alpha(w_{j-1}) =  \alpha(w_j)$. Since $(j-1)$ is even, by Invariant $I_3$,  $L_{w_{j-1}}$ lies entirely on $\Pi(\alpha(w_j))$, and the $Y$-line through 
 $init(w_j)$ intersects (the extension of) the horizontal hand of $L_{w_{j-1}}$.  
 We construct a vertical arrow for $L_{w_j}$ that starts at $init(w_j)$ and touches $L_{w_{j-1}}$ (we extend $L_{w_{j-1}}$ if necessary), e.g., see Fig.~\ref{invariants}(d). 
 We then construct the other hand of $L_{w_j}$ using a $(+Z)$-arrow that starts at $init(w_j)$ and stops at
 $\Pi(\alpha(w_{j+1})$, and thus satisfy Invariant $I_3$.
 Note that for the last vertex $v_k$, 
  $L_{v_k}$ either has a peak  on $\Pi(\alpha(v_{k+1}))$ or lies
 entirely on $\Pi(\alpha(v_{k+1}))$ (depending on the parity of $k$).

This completes the construction of $\Psi'_1$, which already satisfies $I_3$. 
 Therefore, it remains to show that  $\Psi'_1$ satisfies $I_1$ and $I_2$. 
 It is straightforward to observe that all the adjacencies have been realized. 
 We thus need to show that we did not create any unnecessary adjacency. 
 The only nontrivial part of the construction is the modification of the 
 blue $L$-shapes to realize the blue-blue adjacencies, and it suffices to 
 show that we do not intersect any unnecessary blue $L$-shape or any
 red $L$-shape during this process.  By construction, the polygonal path determined by the blue $L$-shapes
 is monotonically increasing along the $Z$-axis, and hence 
 the modification does not create any unnecessary blue-blue adjacency. 
 Moreover, by construction, the joint of the red vertices, and thus the initiators
 of the  blue vertices are also in general position. Therefore, the modification does not 
  introduce any unnecessary red-blue adjacency. Hence $\Psi'_1$ satisfies $I_1$. 
  Since the blue vertices of degree one in $\Psi'_1$ are represented as 
  points directly above (with respect to $\Pi_{xy}$) the joints of the red $L$-shapes,
  $\Psi'_1$ satisfies $I_2$.

\subsection{Construction of $\Psi'_i$} 
We now assume that $i>1$ and for every $q<i$, $\Psi'_q$ satisfies the Invariants $I_1$--$I_3$. Here we describe the construction of $\Psi'_i$. 

\textbf{Red-Red and Red-Blue (Equivalently, Blue-Red) Adjacencies:}
 Let $\C_i$ be the red chain $v_j,u_{j+1},\ldots,v_{j+q}$ with head $v_h$ and tail $v_t$. The tail $v_t$ has two red neighbors preceding it on $P$: one  is $v_{j+q}$, and  the other one is its red parent  $v_r$. 
 We distinguish the following two cases. 

\textbf{Case 1 ($v_r$ belongs to $\C_i$):} 
 By Invariant $I_3$ and the choice of $\alpha(\cdot)$ value,
 $L_{v_h}$ either entirely lies on $\Pi(\alpha(v_t))$,
 or contains only a peak on $\Pi(\alpha(v_t))$.  

If $L_{v_h}$ contains only a peak $o$ on $\Pi(\alpha(v_t))$,
 then the idea is to draw  $\C_i$ and $blue(\C_i)$ independently,
 and then merge the drawing such that $L_{v_j}$ touches $o$.
 Specifically,  we find a rectangle  $R$ on $\Pi(\alpha(v_t))$
 with the bottom-left corner at $o$.  We construct a  drawing
 $D$ of $\C_i$ and $blue(\C_i)$ by  mimicking the construction
 of $\Psi'_1$, and place $D$ (possibly by scaling down)  inside
 $R$, e.g., see Fig.~\ref{GammaI}(a). Note that $D$ does not
 contain any blue-blue adjacencies. By construction, one hand
 $r$ of $L_{v_j}$ lies on $R$ (the other is represented by a
 $(+Z)$-arrow). We adjust the placement of $D$ such that the
 peak of $r$ coincides with $o$. We then perturb  $D$ such
 that the initiators of $blue(\C_i)$, and the degree-one blue
 vertices of $\Psi'_{i-1}$ lie in general position.
 
\begin{figure}[pt]
\centering
\includegraphics[width=.9\textwidth]{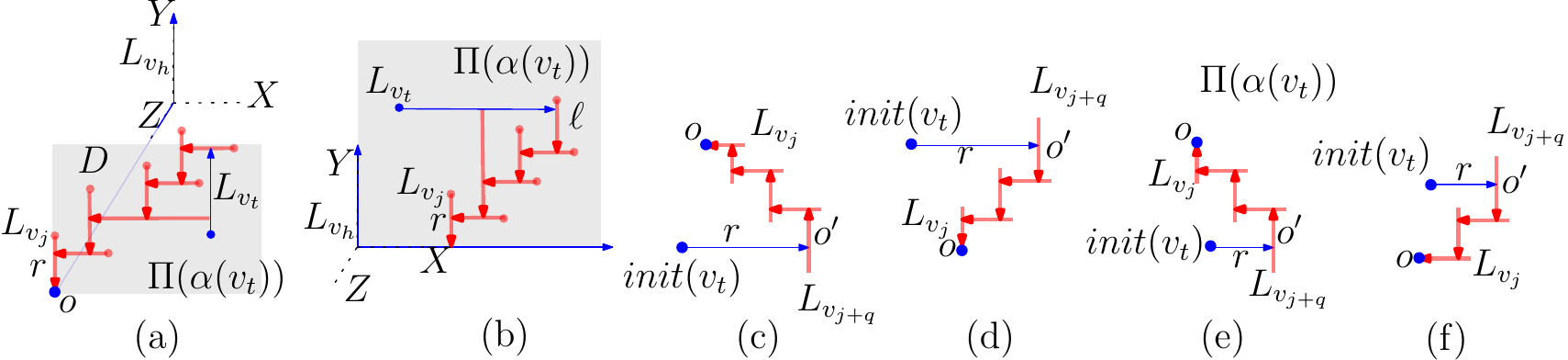}
\caption{(a)--(b) Construction of $\Psi'_i$, where $v_r$ belongs to $\C_i$. (c)--(f) Different scenarios  for the construction of $D$. The $(+Z)$-arrows are not shown for clarity.}
\label{GammaI}
\end{figure}

If $L_{v_h}$ lies entirely on $\Pi(\alpha(v_t))$, then by Invariant $I_3$, $L_{v_h}$ is non-degenerate.
   We find a rectangle  $R$ on $\Pi(\alpha(v_t))$ with one side along the   horizontal 
   hand of $L_{v_h}$. We then construct a drawing 
    $D$ of $C_i$ and $blue(\C_i)$ by  mimicking the construction of $\Psi'_1$.
  Recall that such construction enforces $L_{v_{j+q}}$ to contain
 a $X$-segment. Instead, we use a symmetric construction such that 
 $L_{v_{j}}$  contains a $Y$-segment on $\Pi(\alpha(v_t))$, and thus the hand $\ell$ of $L_{v_{j+q}}$
 that lies on  $\Pi(\alpha(v_t))$ may be horizontal or vertical (depending on the number 
 of vertices in $\C'_i$). If $\ell$ is vertical (resp., horizontal), then we represent $L_(v_t)$
 as a horizontal (resp., vertical) arrow with origin at $init(v_t)$. 
 It is now straightforward to  place 
 $D$ (possibly taking vertical reflection) inside $R$
  such that $L_{v_j}$ touches the horizontal hand of $L_{v_h}$, e.g., see Fig.~\ref{GammaI}(b).

Observe that in both the cases  (above), we have a special scenario, as follows:  
 If $v_r$ coincides with $v_j$, then by the construction of the red
  $L$-shapes $L_{v_r}$ does not contain any $(+Z)$-arrow,
 e.g., see  Fig.~\ref{invariants}(e). This is fine 
  as long as $v_j \not = v_{j+q}$, because $v_r$ already contains three incidences
  at its current hand. If $v_j$ coincides with  $v_{j+q}$, then we create a 
  $(+Z)$-arrow for $L_v$ that stops at $\Pi(\alpha(w))$, where $w$ is a blue child of $v_j$,
  e.g., see  Fig.~\ref{invariants}(f).

\textbf{Case 2 ($v_r$ belongs to $\Psi'_{i-1}$):} In this scenario, the degree of $v_t$  in $G'_{i-1}$ is one,  and by Invariant $I_2$, $v_t$ is  represented as a point in $\Psi'_{i-1}$. 
  We distinguish two subcases depending on the size of $\C_i$.

\textit{Case 2a ($\C_i$ has two or more vertices):}  
 If $L_{v_h}$ contains only a peak $o$ on $\Pi(\alpha(v_t))$, then 
 we represent $L_{v_t}$ using a rightward arrow $r$ that starts at $init(v_t)$
 and stops at some point $o'$ to the right of the $Y$-line though $o$.
 We then  construct a  drawing $D$ of $\C_i$ and $blue(\C_i)$ on $\Pi(\alpha(v_t))$ mimicking 
 the construction of $\Psi'_1$. However, this is simpler since the red parent of 
 $v_t$ does not belong to $\C_i$. We ensure that $L_{v_j}$ has a $Y$-segment. 
 Figs.~\ref{GammaI}(c)--(f) show  all distinct scenarios.


Assume now that $L_{v_h}$ lies  entirely on $\Pi(\alpha(v_t))$.  By Invariant $I_3$ and  the choice of $\alpha(\cdot)$ values, $L_{v_h}$ is non-degenerate and 
 (the extension of) its horizontal hand intersects the  vertical line through $init(v_t)$ in $\Psi'_{i-1}$.
 The drawing in this case is illustrated in Figs.~\ref{GammaI2}(a)--(b). Appendix~\ref{case2b}
 includes the details.

Note that in both cases we may need to perturb the drawing $D$ such that the $L$-shapes in $D$ do not create
 any unnecessary intersections, and $blue(\C'_i)$ and the degree-one blue vertices of $\Psi'_{i-1}$ lie in general position.
 

\begin{figure}[pt]
\centering
\includegraphics[width=.85\textwidth]{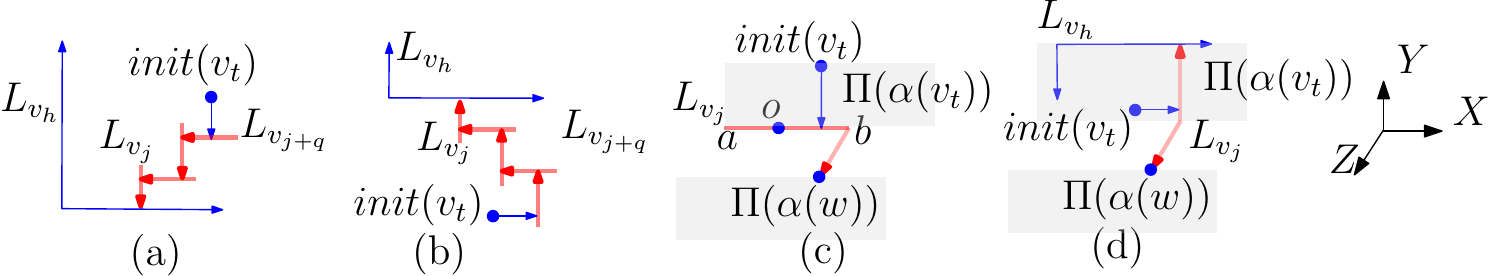}
\caption{(a)--(b) Different scenarios  while constructing $D$. (c)--(d) Case 2b.}
\label{GammaI2}
\end{figure}

\textit{Case 2b ($\C_i$ has only one vertex):} This case is straightforward to process, e.g.,  see Figs.~\ref{GammaI2}(c)--(d). Details are included in Appendix~\ref{case2b}. 

 

\textbf{Blue-Blue Adjacencies:} 
Let $w_1(=v_t),\ldots,w_k$ be the blue chain $\C'_i$. If $\C'_i$ does not include all the blue vertices of $G'$, then let $w_{k+1}$ be the first blue vertex following $w_k$ on $P$. Note that $w_k$ is the head of $\C_{i+1}$, and $w_{k+1}$ is the tail of  $\C_{i+1}$. On the other hand, if  $\C'_i$ contains all the blue vertices of $G'$, then consider a dummy vertex $w_{k+1}$. 

If $k=1$, then all the blue-blue adjacencies in $G'_i$ are present in $G'_{i-1}$, and we only  construct a $(+Z)$-arrow which starts at $init(w_1) = init(v_t)$ and stops at $\Pi(\alpha(w_2))$. Otherwise, we modify $L_{w_j}$, where $1\le j\le k$, to realize the blue-blue adjacencies. By Invariant $I_2$ and the  initial construction of $blue(\C_i)$, all $L_{w_j}$ except $L_{w_1}$ are represented as  distinct points on $\Pi(\alpha(w_j))$, which are in general position. Therefore, we can modify $L_{w_j}$  satisfying Invariant $I_3$ in the same way as we realized the blue-blue adjacencies in $\Psi'_1$.

The argument that $\Psi'_i$ satisfies the induction invariants are similar to that of $\Psi'_1$, but 
 we need to consider also the drawing $\Psi'_{i-1}$. While drawing of $\C'_i$ and 
 $blue(\C_i)$, we ensured the general position property, and thus satisfied Invariant $I_2$. 
 This general position property leads  us to the argument that no unnecessary adjacency is created
 during the modification of the blue $L$-shapes (i.e., Invariant $I_1$).
 Finally, the Invariant $I_3$ follows from the modification of the blue $L$-shapes.


Finally, we modify $L_{v_n}$ to realize the edge $(v_1,v_n)$. Since $v_n$ is blue, by Invariant $I_3$, one of the hands of $L_{v_n}$ can be extended, and we extend this hand using about three more bends to touch $L_{v_1}$. 
  The following theorem summarizes the result of this section. 
\begin{theorem}
\label{3-reg-bh} 
Every  3-regular  Hamiltonian bipartite graph has   a contact representation where all strings but one are  $B_1$-strings. 
\end{theorem}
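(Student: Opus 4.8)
The plan is to exploit the rigidity forced by $3$-regularity together with a Hamiltonian cycle. Fix a Hamiltonian path $P=(v_1,\dots,v_n)$ and delete the edge $(v_1,v_n)$ to obtain $G'$; then every vertex of $G'$ is incident to exactly one non-Hamiltonian edge, so orienting each such edge from its lower-indexed endpoint to its higher-indexed endpoint partitions the vertices into \emph{red} sources and \emph{blue} sinks. This breaks $P$ into alternating monochromatic chains, grouped into chain pairs $(\mathcal{C}_i,\mathcal{C}'_i)$. I would build the $L$-contact representation of $G'$ one chain pair at a time, producing a nested sequence $\Psi'_1\subseteq\cdots\subseteq\Psi'_k$ of partial representations of the graphs $G'_i$ induced by the first $i$ chain pairs together with the blue-incidence edges, maintaining three drawing invariants, and then patch in the one remaining edge $(v_1,v_n)$ at the end, letting that single string use a few extra bends.

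The geometric backbone is a layering of space by the integer function $\alpha(\cdot)$ on the blue vertices. Each red chain is drawn as a sequence of alternately horizontal and vertical arrows (degenerate $L$-shapes) whose joints lie in general position along an $xy$-monotone path inside one horizontal plane; each red-blue edge is realized by a $(+Z)$-arrow climbing from the peak of the red $L$-shape up to the plane $\Pi(\alpha(w))$ of its blue child $w$, so every blue vertex is born as a point (its \emph{initiator}) on its designated plane. The Hamiltonian (blue-blue) edges inside a blue chain are then realized by promoting these points to $L$-shapes: $\alpha$ is chosen so that consecutive blue vertices either share a level (even-positioned pairs, joined within a single plane) or jump to a strictly higher one, so the polygonal path formed by the blue $L$-shapes is monotone in $z$. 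Monotonicity in $z$ plus general position of the initiators is exactly what forbids spurious contacts, and this settles $\Psi'_1$ essentially by construction. The invariants $I_1$ (it is a valid $L$-contact representation of $G'_i$), $I_2$ (every degree-one blue vertex is a point in general position on its plane), and $I_3$ (the precise local shape of each $L_{w_j}$ relative to $\Pi(\alpha(w_{j+1}))$: single-peak versus fully in-plane, which hand is horizontal, which $Y$-line stabs which hand) are exactly the bookkeeping needed to keep the induction alive.

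For $i>1$ the two new complications are that $\mathcal{C}_i$ already has a head vertex $v_h$ drawn in $\Psi'_{i-1}$, and that some blue vertices of $\mathcal{C}'_i\cup blue(\mathcal{C}_i)$ may have red parents already drawn. I would split on whether the red parent $v_r$ of the tail $v_t$ of $\mathcal{C}_i$ lies in $\mathcal{C}_i$ or in $\Psi'_{i-1}$, and within each case on whether $L_{v_h}$ meets $\Pi(\alpha(v_t))$ in a single peak or lies entirely in that plane, and on whether $\mathcal{C}_i$ is a single vertex. In every subcase the move is the same: build a scaled, possibly reflected copy $D$ of the $\Psi'_1$-style drawing of $\mathcal{C}_i$ and $blue(\mathcal{C}_i)$ inside a small rectangle of the plane $\Pi(\alpha(v_t))$ anchored at (or along) the appropriate hand of $L_{v_h}$ --- invariant $I_3$ being what tells us exactly where that anchor sits and which directions remain free --- then perturb $D$ to restore general position of all initiators. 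Realizing $\mathcal{C}'_i$ afterwards is identical to the $\Psi'_1$ case because $I_2$ guarantees the freshly placed blue points are in general position. Re-checking $I_1$--$I_3$ for $\Psi'_i$ mirrors the base case, with the single extra obligation of verifying that $D$ does not collide with $\Psi'_{i-1}$, which is again bought by general position and $z$-monotonicity. Since $v_n$ is blue, $I_3$ leaves one hand of $L_{v_n}$ free, so I extend it with about three more bends to reach $L_{v_1}$, yielding a representation of $G$ in which every string but $L_{v_n}$ is an $L$-shape.

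The main obstacle is precisely this merge step: showing, in every combinatorial configuration of where $v_r$, $v_h$, and the red parents of $blue(\mathcal{C}_i)$ live and of the parities of the chain lengths, that the locally built drawing $D$ can be inserted into $\Psi'_{i-1}$ without creating any unintended contact. This is why $I_3$ must be formulated with enough precision to pin down the anchor point and the free directions in each case, and it is the reason an explicit invariant-driven induction, rather than a slicker global argument, seems unavoidable.
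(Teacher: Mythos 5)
Your proposal follows essentially the same route as the paper: the same red/blue coloring via the non-Hamiltonian edges, the same chain-pair decomposition with the $\alpha(\cdot)$ layering, the same incremental construction of $\Psi'_i$ under invariants $I_1$--$I_3$ with the case split on the location of $v_r$ and the shape of $L_{v_h}$ on $\Pi(\alpha(v_t))$, and the same final multi-bend extension of $L_{v_n}$. You have also correctly identified the merge step as the crux, which is exactly where the paper expends its case analysis.
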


\section{Lower Bounds}\label{lb}
\begin{theorem}
\label{5-reg}
No 5-regular  graph admits a string contact representation.
\end{theorem}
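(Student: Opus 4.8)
The plan is to use a counting argument based on the contact structure. Consider a hypothetical string contact representation $\Psi$ of a $5$-regular graph $G=(V,E)$. Each vertex $v$ is a string (a simple polygonal path), and each edge $(u,v)$ corresponds to exactly one contact point where the string of $u$ meets the string of $v$. Since no three strings meet at a point, every contact point is shared by exactly two strings, and since strings do not cross, any such point must be an \emph{endpoint} of at least one of the two strings involved. The key observation is that a simple polygonal path has only two endpoints, so a single string can "donate" an endpoint to at most two distinct contact points. More precisely, a contact on the interior of string $Q_v$ must use an endpoint of the other string $Q_u$; a contact at an endpoint of $Q_v$ may or may not also be an endpoint of $Q_u$.

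First I would formalize the charging scheme: orient responsibility for each contact point $p=Q_u\cap Q_v$ to a string one of whose endpoints is $p$ (such a string exists by the non-crossing condition). Each string has two endpoints, so it can be responsible for at most $2$ contacts. Hence the total number of contact points — which equals $|E|$ — is at most $2|V|$. But a $5$-regular graph has $|E| = \tfrac{5}{2}|V| > 2|V|$, a contradiction. I would then note the argument extends: any graph with a string contact representation has average degree at most $4$, so $5$-regular (indeed any graph with minimum degree $5$) is excluded.

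Second, I would be careful about one subtlety: a contact point could be a \emph{shared endpoint} of both strings $Q_u$ and $Q_v$. In that case, assigning responsibility to either one is fine, but we must make sure each string is charged for at most $2$ points total across all its contacts, whether interior-contacts-of-others or shared-endpoints. Since a string has exactly two endpoints and each contact it is "responsible for" consumes one of them (two distinct contacts cannot be at the same endpoint location, as that would force three strings — or a self-touch — at a point, violating simplicity and the no-three-strings rule), the bound of $2$ per string is robust. This endpoint-accounting is the only place where the hypotheses (simple path, no three strings at a point, no crossings) are genuinely used.

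The main obstacle I anticipate is purely in the write-up precision: making airtight the claim that a string cannot be responsible for more than two contacts, i.e., ruling out that two different contacts coincide at the same endpoint of one string. This follows because two strings meeting there plus the fact that contacts are at distinct points (otherwise three strings at a point, or a degenerate cycle, which the definition forbids) forces the two "partner" strings to themselves meet at that point, a three-string meeting — contradiction. Once that is nailed down, the inequality $|E|\le 2|V|$ versus $|E|=\tfrac52|V|$ closes the proof immediately.
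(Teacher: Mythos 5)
Your proposal is correct and follows essentially the same argument as the paper: the paper orients each edge toward the string containing the contact as an endpoint of the other and bounds the out-degree by two, which is exactly your charging scheme of at most two contacts per string (one per endpoint), yielding $|E|\le 2|V|$ against $|E|=\tfrac{5}{2}|V|$. Your explicit handling of the shared-endpoint subtlety and the observation that the bound extends to minimum degree $5$ are welcome refinements but not a different proof.
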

\begin{proof}[Proof Outline]
Let $G$ be a 5-regular graph, and suppose for a contradiction that $G$ admits a string contact representation $D$. For each edge $(u,w)$,  if the string of $u$ touches the string of $w$, then direct the edge from $u$ to $w$. Note that $G$ has exactly $\frac{5n}{2}$ edges, hence a vertex   with  
 out-degree $\ge 3$. 
\end{proof}

\begin{theorem}
\label{4-reg}
 $K_5$ (a $4$-regular graph) does not have $L$-contact representation.
\end{theorem}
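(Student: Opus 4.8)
The plan is to combine a counting argument, in the spirit of the proof of Theorem~\ref{5-reg}, with a structural observation that completely pins down the ambient geometry of every $L$-shape, and then to dispose of a short list of cases. Suppose for contradiction that $K_5$ has an $L$-contact representation. For each edge, orient it from a string one of whose endpoints realizes the contact (breaking ties arbitrarily). Since every $L$-shape has exactly two endpoints (its two peaks), each vertex has out-degree at most $2$; as $K_5$ has $10$ edges on $5$ vertices, every vertex has out-degree and in-degree exactly $2$. Hence both peaks of every $L$-shape are contact points, no edge is realized simultaneously at a peak of both its endpoints, and each of the $10$ peaks lies in the relative interior (a point other than a peak) of exactly one other $L$-shape --- the unique string it touches.

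The structural observation is that an $L$-shape, being the union of two perpendicular axis-parallel segments meeting at the joint, lies entirely in a single axis-parallel plane: both hands share the joint's coordinate along the axis to which neither hand is parallel, so the whole shape lies in a plane of one of the families $\Pi_{xy},\Pi_{xz},\Pi_{yz}$. Because $K_5$ is complete, no two $L$-shapes may lie in distinct parallel planes (disjoint parallel planes cannot share a point), so the five $L$-shapes occupy at most three mutually perpendicular planes $P_1,P_2,P_3$, at most one per family, and any contact between a shape of $P_i$ and a shape of $P_j$ must occur on the axis-parallel line $\ell_{ij}=P_i\cap P_j$. Moreover an $L$-shape meets a line lying in its own plane in at most one point, unless one of its hands lies along that line.

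It then suffices to rule out each way of distributing the five shapes among the planes, i.e.\ the partitions $(5),(4,1),(3,2),(3,1,1),(2,2,1)$. If all five shapes lie in one plane $P$, we have a contact representation of $K_5$ by (possibly degenerate) axis-parallel $L$-shapes in the plane $P$, and such a representation has a planar contact graph --- route each edge along the two shapes realizing it, and use interior-disjointness and the no-three-strings-at-a-point property to see the routes do not cross --- contradicting the non-planarity of $K_5$. In case $(4,1)$, the lone shape $B$ in a second plane must touch all four shapes of $P_1$ on $\ell_{12}$, which by the last sentence of the previous paragraph forces one hand of $B$ onto $\ell_{12}$; its other hand then touches nothing and may be deleted, collapsing the configuration to the one-plane case. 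In cases $(3,2)$ and $(2,2,1)$ there are two planes each containing at least two shapes, and every cross-pair must touch on the common line $\ell$, which forces one hand of each such shape onto $\ell$; an elementary argument about intervals on a line then yields a contradiction (a hand-interval of a shape from one side would have to meet two, resp.\ three, pairwise interior-disjoint hand-intervals from the other side, forcing an overlap of interiors, or forcing two such intervals of the same side to coincide). Finally, in case $(3,1,1)$ the two singleton planes each force a hand of their shape onto $\ell_{12}$ resp.\ $\ell_{13}$ in order to reach the three shapes of $P_1$; if the remaining $B$--$C$ contact can be taken inside $P_1$ we again land in the one-plane case, and otherwise one invokes the first paragraph (in-degree exactly $2$ and no doubly-anchored edge) at $B$ and at $C$ to derive a contradiction. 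Hence $K_5$ admits no $L$-contact representation.

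The main obstacle is the multi-plane case analysis of the last paragraph: the governing principle --- each $L$-shape lives in an axis-parallel plane, and inter-plane contacts are confined to the axis-parallel intersection lines --- is clean, but cases $(3,1,1)$ and $(2,2,1)$ require carefully combining the line-reachability constraints with the degree/orientation bookkeeping, and one must handle the degenerate positions (joints landing on intersection lines, peaks coinciding with the common point $P_1\cap P_2\cap P_3$) where either the planarity reduction or the simple counting has to be used. The finiteness of the configuration --- only five shapes, and essentially a unique orientation of $K_5$ with all out-degrees $2$ --- keeps the analysis manageable.
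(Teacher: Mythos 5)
Your opening observations are sound and partly overlap with the paper's proof: each axis-aligned $L$-shape lies in a single plane from one of the families $\Pi_{xy},\Pi_{xz},\Pi_{yz}$, adjacent shapes of the same family must share their plane, and an all-in-one-plane configuration contradicts the non-planarity of $K_5$. The out-degree count (every vertex has out-degree exactly two, so every peak is a contact point) is also correct and in the spirit of Theorem~\ref{5-reg}. However, your proof then branches into a five-way case analysis, and two of its ingredients are genuinely problematic. First, the lemma ``an $L$-shape meets a line lying in its own plane in at most one point, unless one of its hands lies along that line'' is false: each of the two hands can cross the line transversally, so the shape can meet the line in \emph{two} points. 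This matters precisely in the cases $(3,2)$ and $(2,2,1)$, where a shape on the smaller side needs only two distinct contact points on the common line $\ell$; your forcing step (``every cross-pair must touch on $\ell$, which forces one hand of each such shape onto $\ell$'') therefore does not go through for those shapes, and the subsequent interval argument is applied to intervals that need not exist. Second, the cases $(2,2,1)$ and especially $(3,1,1)$ are not actually closed: ``one invokes the first paragraph at $B$ and at $C$ to derive a contradiction'' is a placeholder rather than an argument, and you yourself flag the degenerate positions (joints landing on intersection lines, peaks at $P_1\cap P_2\cap P_3$) as unresolved. As written, the proof has a genuine gap.

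For comparison, the paper's proof shows that no case analysis is needed. By pigeonhole two shapes, say $L_{v_1}$ and $L_{v_2}$, lie on a common plane, without loss of generality $\Pi(0)$. For each remaining vertex $v_i$, the subpath of $L_{v_i}$ between its contact point $a_i$ with $L_{v_1}$ and its (distinct) contact point $b_i$ with $L_{v_2}$ is forced to lie on $\Pi(0)$: if that subpath contained the joint without lying on $\Pi(0)$, both of its segments would have to be perpendicular to $\Pi(0)$, forcing $a_i=b_i$. Hence at least one hand of every shape lies on $\Pi(0)$ and the other is either on $\Pi(0)$ or perpendicular to it, which pushes every contact point of the representation onto $\Pi(0)$ and yields a planar contact representation of $K_5$ --- a contradiction. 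If you want to salvage your route, this ``connecting subpath'' argument is the missing tool that disposes of all your multi-plane cases at once.
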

\begin{proof}[Proof Outline]
Suppose for a contradiction that $K_5$ admits an $L$-contact representation, and let $D$ be such a representation of $K_5$. Let $v_1,\ldots, v_5$ be the vertices of $K_5$. Observe that any axis-aligned $L$-shape must entirely lie on one of the three types of plane: $\Pi_{xy}$, $\Pi_{yz}$, and $\Pi_{xz}$. Since there are five $L$-shapes in $D$,  
 the plane types for at least two $L$-shapes must be the same.

Without loss of generality assume that   $L_{v_1}$ and $L_{v_2}$ both lie on $\Pi_{xy}$. Since $v_1$ and $v_2$ are adjacent,  the planes of $L_{v_1}$ and $L_{v_2}$ cannot be distinct.    Therefore, without loss of generality assume that they coincide with $\Pi(0)$. Since $v_i$, where $3\le i\le 5$,  is adjacent to both $v_1$ and $v_2$, $L_{v_i}$ must share a point $a_i$ with   $L_{v_1}$ and a point $b_i$ with $L_{v_2}$. Since no three strings meet at a point in $D$, the points $a_i$ and $b_i$ are distinct. The rest of the proof claims that  the polygonal path  $P_i$ of $L_i$ that starts at $a_i$ and ends at $b_i$, lies entirely on $\Pi(0)$. 
 This property of $D$ can be used to argue that  $D$ is a string contact representation of $K_5$ on $\Pi(0)$, which contradicts  that  $K_5$ is a non-planar graph. Appendix~\ref{applb} includes the details.
\end{proof}

\begin{theorem}
\label{3-reg}
 $K_{3,3}$ (a $3$-regular graph) does not have segment contact layout.
\end{theorem}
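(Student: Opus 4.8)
The plan is to argue exactly as in the proof outline for Theorem \ref{4-reg}, exploiting the fact that segments ($B_0$-strings) have even less freedom than $L$-shapes, combined with the non-planarity of $K_{3,3}$. Suppose for contradiction that $K_{3,3}$ admits a contact representation $D$ by axis-aligned segments. Let the two sides of the bipartition be $\{x_1,x_2,x_3\}$ and $\{y_1,y_2,y_3\}$. Each segment is parallel to one of the three axes, so by pigeonhole at least one class of two segments is axis-parallel in the same direction; more usefully, I would first locate a $K_{2,2}$ (a 4-cycle) inside $K_{3,3}$ and pin down its geometry. Pick the 4-cycle $x_1 y_1 x_2 y_2$. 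Since $S_{x_1}$ and $S_{x_2}$ are non-adjacent (they lie in the same part), and each of $S_{y_1},S_{y_2}$ must touch both of them, a short case analysis on the axis-directions shows that $S_{x_1}, S_{x_2}$ must be parallel to each other and $S_{y_1}, S_{y_2}$ parallel to each other in a perpendicular direction, and moreover all four segments lie in a common plane $\Pi$ (say $\Pi(0)$, after relabeling axes): if $S_{x_1}$ is an $X$-segment and $S_{y_1}$ is a $Y$-segment touching it, they share the unique point, forcing the same $z$-coordinate on an interval, hence equal $z$; propagating around the 4-cycle forces all four into $\Pi(0)$, and each $S_{x_i}$ must then be an $X$-segment (else it could meet a $Y$-segment in $\Pi(0)$ in at most one way but could not also leave the plane).

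Next I would bring in the remaining two vertices $x_3$ and $y_3$. The segment $S_{y_3}$ must touch $S_{x_1}, S_{x_2}$, and $S_{x_3}$ (all three $x$-segments); similarly $S_{x_3}$ must touch $S_{y_1},S_{y_2},S_{y_3}$. The key claim, mirroring the $K_5$ argument, is that $S_{x_3}$ and $S_{y_3}$ are also forced into $\Pi(0)$: since $S_{y_3}$ touches the two $X$-segments $S_{x_1},S_{x_2}$ of $\Pi(0)$ and a single segment can hit an $X$-segment only transversally, $S_{y_3}$ must be a $Y$-segment lying in $\Pi(0)$ (a segment meeting $S_{x_1}\subset\Pi(0)$ at an endpoint either lies in $\Pi(0)$ or is a $Z$-segment piercing it at one point; but a $Z$-segment meeting $S_{x_1}$ at one point cannot also meet the parallel $S_{x_2}$, contradiction). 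Symmetrically $S_{x_3}\subset\Pi(0)$ is an $X$-segment. Hence all six segments lie in $\Pi(0)$, so $D$ restricted to $\Pi(0)$ is a segment contact representation of $K_{3,3}$ in the plane — but a contact representation of segments in $\mathbb{R}^2$ is in particular a planar embedding (two touching segments give a non-crossing edge, and no three segments meet at a point so no degenerate crossings arise), contradicting the non-planarity of $K_{3,3}$.

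The main obstacle is the geometric forcing step: showing that any two non-adjacent segments both required to touch a common pair of adjacent segments must be coplanar with them, and that a segment touching two parallel segments in a plane must itself lie in that plane rather than pierce it. This needs a clean enumeration of how an axis-aligned segment can meet another axis-aligned segment at a single point — either they are coplanar and cross/touch at an interior-or-endpoint incidence, or one is perpendicular to the other's plane and pierces it at an isolated point — together with the observation that the "piercing" option can serve at most one of the two parallel targets. Once that lemma is isolated, the rest is bookkeeping; I would state it as a small helper claim (or simply cite the analogous argument in Appendix~\ref{applb} used for Theorem~\ref{4-reg}) and then run the propagation around the 4-cycle and out to $x_3,y_3$ as above, concluding with the planarity contradiction.
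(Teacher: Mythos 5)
Your proposal is correct and follows essentially the same route as the paper's proof: establish that the four segments realizing a $4$-cycle must be coplanar, propagate that coplanarity to the remaining two vertices, and then contradict the non-planarity of $K_{3,3}$. The only (immaterial) difference is the propagation step: the paper applies the $4$-cycle coplanarity lemma to a second $4$-cycle sharing an edge whose two non-collinear segments determine the plane, whereas you argue directly that a segment touching two disjoint segments of $\Pi(0)$ cannot be a $Z$-segment and hence lies in $\Pi(0)$ --- noting that what this step actually uses is the disjointness, not the parallelism, of $S_{x_1}$ and $S_{x_2}$.
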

\begin{proof}[Proof Outline]
The proof is based on the observation that any contact representation of a 4-cycle, i.e., a cycle of four vertices, with axis-aligned $B_0$ strings, lies entirely on a single plane. Furthermore, two adjacent segments completely determine this plane. Since the vertices of $K_{3,3}$ can be covered by two 4-cycles that share an edge,  any string contact representation of $K_{3,3}$ must lie  on a single plane.  A detailed proof is in Appendix~\ref{applb}.
\end{proof}   

\section{Directions for Future Research}

Improving the complexity bound of the string contact representations for the graph classes we discussed in Theorems~\ref{4-reg-gu}--\ref{3-reg-b} is a natural avenue to explore.  But the most fascinating question is   whether every $3$-regular graph admits an $L$-contact representation in $\mathbb{R}^3$, even with the `triangle-free' constraint. 
 
\bigskip
\noindent{\bf Acknowledgments.} 
 The author is thankful to Anna Lubiw and anonymous reviewers for their detailed comments to improve the presentation of the paper.


\bibliography{ref}

\newpage
\appendix

\section{Representations with String Complexity 4}
\label{sec:4-string}
\begin{theorem}
\label{4-reg-u} 
Every graph with maximum degree $4$ admits a  string contact representation with   complexity $4$. 
\end{theorem}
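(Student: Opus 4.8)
The plan is to reduce Theorem~\ref{4-reg-u} to the geometric thickness result of Duncan, Eppstein and Kobourov~\cite{DuncanEK04}. Their theorem gives, for any graph $G=(V,E)$ with maximum degree $4$, a partition $E=E_1\cup E_2$ such that $G_1=(V,E_1)$ and $G_2=(V,E_2)$ are each disjoint unions of paths and cycles, together with a straight-line-friendly simultaneous drawing in which every vertex sits at one common point $p_v\in\mathbb{R}^2$ and every edge is drawn as a $1$-bend polygonal path, with edges of the same $E_i$ pairwise non-crossing except at shared endpoints. First I would take this planar picture, place a copy of it on the plane $\Pi(0)$ (for $G_1$) and a copy on the plane $\Pi(1)$ (for $G_2$), aligned so that $p_v$ on $\Pi(0)$ and $p_v$ on $\Pi(1)$ differ only in the $Z$-coordinate; the vertical $Z$-segment joining these two copies of $p_v$ will become the "spine" of the string $Q_v$ representing $v$.

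Next I would build the string $Q_v$ for each vertex $v$. Since $\deg_G(v)\le 4$ and the edges are split between the two layers, $v$ has at most two incident edges in $G_1$ and at most two in $G_2$; moreover in each layer $G_i$, which is a union of paths and cycles, the edges at $v$ enter along the $1$-bend routes prescribed by the simultaneous drawing. For layer $1$ on $\Pi(0)$, I attach to the bottom end of the spine the (at most two) half-edges incident to $v$ in $G_1$: each such half-edge is the portion of a $1$-bend edge route from $p_v$ up to (but not reaching) the route's bend point, so it contributes at most one bend; routing the two half-edges of $v$ as the two "arms" emanating from $p_v$ within $\Pi(0)$ and joining them through $p_v$ costs one more bend where they meet the spine. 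Doing the mirror construction on $\Pi(1)$ and then realizing an actual edge $(u,v)\in E_i$ by letting the half-edge of $u$ and the half-edge of $v$ meet at the common bend point of that edge's $1$-bend route (perturbing slightly, as in Fig.~\ref{fig:pcr}(c)--(e), so that the contact is a proper endpoint-touch and no three strings coincide) yields a contact exactly when $uv\in E$. Counting bends along $Q_v$: one bend where the spine meets the plane $\Pi(0)$, one where it meets $\Pi(1)$, plus the bends inside each plane — with a careful choice one can route each planar arm so that the whole string uses at most $4$ bends, and the non-crossing property within each $E_i$ guarantees interior-disjointness on each plane while the two planes are disjoint except along the (thin) spines, which are placed in general position.

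I expect the main obstacle to be the bookkeeping of bends: naively, a spine (one bend to turn onto $\Pi(0)$, one to turn onto $\Pi(1)$) plus two $1$-bend arms per layer would overshoot $4$, so the construction must be arranged so that the turn off the spine coincides with the bend of one of the arms, and so that each vertex's two co-planar half-edges share the vertex point without an extra bend — essentially drawing the pair of $G_i$-edges at $v$ as a single $2$-bend polyline through $p_v$ whose two ends are the connection points to the neighbors' strings. Verifying that this can always be done simultaneously for every vertex without forcing unwanted crossings is exactly where the planarity and path/cycle structure of $G_1,G_2$ from~\cite{DuncanEK04} is used, and where general-position perturbations must be invoked to kill accidental triple points and accidental touchings. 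Since the paper defers the full argument to this appendix and only claims the transformation is "easy", I would present the spine-plus-arms gadget in a figure, state the bend count per case (degree $0,1,2$ in each layer), and check that adjacency in $\Psi$ matches adjacency in $G$; everything is clearly polynomial-time, as the thickness-$2$ decomposition and its drawing are computed in polynomial time in~\cite{DuncanEK04} and our post-processing is linear in the size of that drawing.
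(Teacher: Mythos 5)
Your starting point is the same as the paper's (the Duncan--Eppstein--Kobourov two-layer decomposition into paths and cycles, drawn orthogonally with one bend per edge, lifted onto two parallel planes joined by a vertical $Z$-segment per vertex), but the way you assign edge geometry to strings has a genuine gap. You split each edge at its bend point and give each endpoint a ``half-edge,'' so a degree-4 vertex $v$ owns up to two arms emanating from $p_v$ on $\Pi(0)$, two more on $\Pi(1)$, plus the spine. A string must be a \emph{simple polygonal path}, so at most two segments of $Q_v$ can meet at any point; but your construction forces three segments to emanate from $p_v$ on $\Pi(0)$ (two arms and the spine), i.e., a Y-junction, which is not a string at all. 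Moving the spine to the far end of an arm does not help: that end is the contact point with a neighbor, and the spine would then arrive on $\Pi(1)$ at a point unrelated to $p_v$, costing extra segments before you can even reach the two arms on $\Pi(1)$ --- where the same branching problem recurs. Your closing remark that ``with a careful choice one can route each planar arm so that the whole string uses at most 4 bends'' is precisely the step that fails; no amount of merging bends fixes a degree-3 junction.

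The idea you are missing is the paper's orientation step: within each layer, orient every cycle cyclically and every path monotonically, so that each vertex has out-degree at most one \emph{per layer}, and then assign each \emph{whole} edge to the string of its tail. Then $Q_v$ is the simple path consisting of (the reverse of) its layer-1 outgoing edge, the vertical spine at $p_v$, and its layer-2 outgoing edge --- five segments, four bends --- and each contact is realized at the head vertex's position ($p_w^{(0)}$ or $p_w^{(1)}$), where only the tail's string endpoint meets the head's string, so no triple points arise. In short: do not split edges between their two endpoints; pick an orientation with out-degree at most one per layer and give each edge entirely to one string.
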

\begin{proof}
The proof is based on the concept of geometric thickness.  Duncan et al.~\cite{DuncanEK04} proved that every graph with maximum degree 4 has geometric thickness two, and if the edges are allowed to be orthogonal, then such a drawing $\mathcal{D}$  can be computed satisfying the following properties. 

\begin{enumerate}
\item[A.] Every vertex in $\mathcal{D}$  has unique $x$ and $y$-coordinates, and each edge $e$ in $\mathcal{D}$  is drawn as a sequence of two axis-aligned line segments between the end vertices of $e$. 

\item[B.] Each planar layer in $\mathcal{D}$  consists of paths and cycles. Each path 
 or cycle $v_1,\ldots,v_k$ in the first (resp., second) layer, is drawn inside a vertical (horizontal) slab, where the path  $v_1,\ldots,v_k$ is drawn as an $x$-monotone ($y$-monotone) polygonal path. 
\end{enumerate}
Fig.~\ref{4-string}(a) illustrates such a drawing $\mathcal{D}$, the edges of one planar layer are drawn using thin lines, and the other planar layer is drawn using thick lines.

\begin{figure}[pt]
\centering
\includegraphics[width=.8\textwidth]{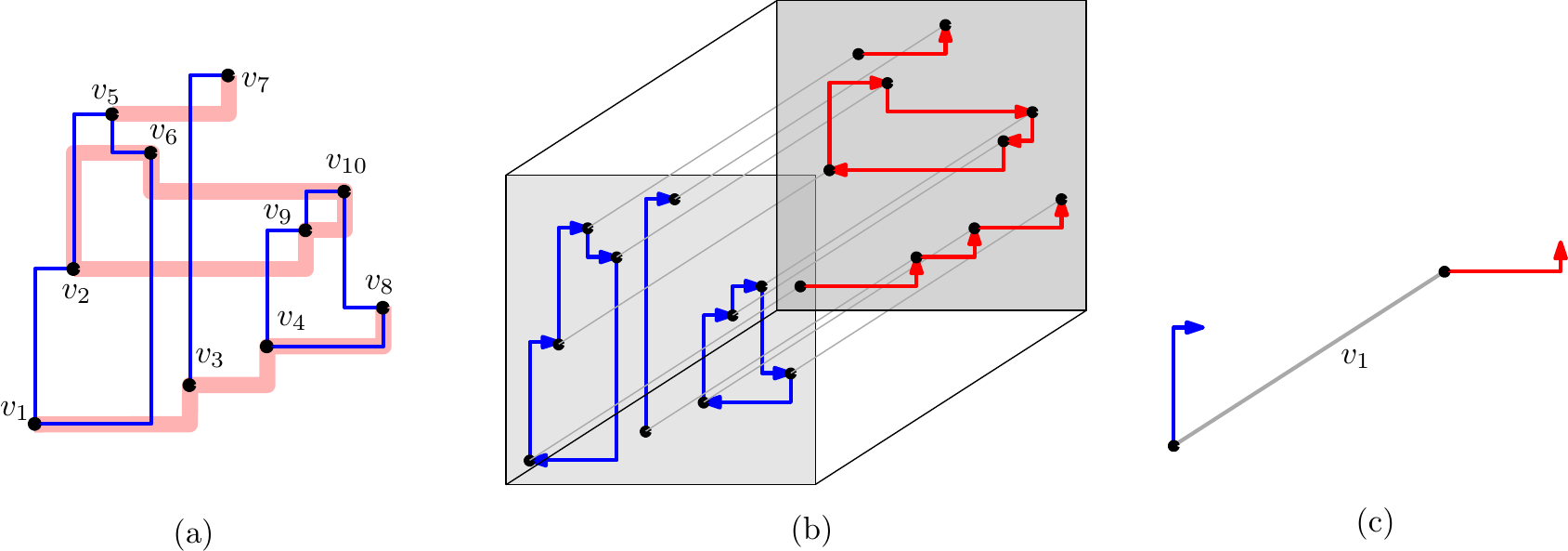}
\caption{(a) Illustration for $\mathcal{D}$. (b) Transformation into a string contact representation. (c) The string corresponding to vertex $v_1$. }
\label{4-string}
\end{figure}

For each cycle $C$ in the first (second) layer, we direct the edges on $C$ in clockwise order, and for each path $P$, we direct the edges of $P$ from left to right (resp., bottom to top). Consequently, each vertex now has out-degree at most   one in each layer. 
 We lift the edges on the second layer up by one unit,   representing each vertex using a unit $Z$-line.  Fig.~\ref{4-string}(b) illustrates a schematic representation of the resulting drawing.   This yields a contact representation of $G$ using $B_4$-strings, where the string of each vertex consists of its outgoing edges and the $Z$-line that connects these outgoing edges.  Fig.~\ref{4-string}(c) illustrates such a $B_4$-string.
\end{proof}

\smallskip
\noindent\textbf{Theorem~\ref{4-reg-gu}.} \emph{
Every  triangle-free Hamiltonian graph $G$ with maximum degree four has a contact representation where all strings but one are  $B_3$-strings. 
}
\begin{proof}
Let $C=(v_1,\ldots,v_n)$ be a Hamiltonian cycle of $G$. Let $H$ be the graph obtained after removing the  Hamiltonian edges from $G$.
 Since every vertex of $H$ is of degree at most two, $H$ is a union of vertex disjoint cycles and paths. We transform each path $P=(w_1\ldots,w_k)$ of $H$ into a cycle by adding a subpath of one or two dummy vertices between
 ($w_1$ and $w_k$) depending on whether $P$ has one or more vertices. 


 Let $Q_1,\ldots, Q_k$ be the cycles in $H$. For each cycle $Q_i$, where $1\le i\le k$, we construct a staircase representation $\Psi'_i$ of $Q_i$ on $\Pi(0)$. If $Q_i$ is a cycle with odd number of vertices, then we construct the staircase representation such that the leftmost segment among the topmost horizontal segments corresponds to the vertex with the lowest index in $Q_i$. For example, see the topmost staircase of Fig.~\ref{3-string}(a). We then place  the staircase representations diagonally along a line with slope $+1$. We ensure that the horizontal and vertical slabs containing $\Psi_i$ do not intersect $\Psi_j$, where $1\le j(\not=i)\le k$. We  refer to this representation as $\Psi_H$.  
 
Consider now the edges of the Hamiltonian cycle $C=(v_1,\ldots,v_n)$. Note that each vertex $v_j$, where $1\le j\le n$, is represented using an axis-aligned arrow $r_j$ in $\Psi_H$. 
 For each arrow $r_j$,  we construct a $(+Z)$-arrow $r'_j$ of length $j$ that starts at the origin of $r_j$. Consequently, the plane  $\Pi(j)$ intersects only those arrows $r'_q$, where $j\le q\le n$. Let $I_j$  be the set of intersection points on $\Pi(j)$. By construction $\Psi_H$ satisfies the following sparseness property: 

\begin{description}
\item[Sparseness of $\Psi_H$:]
 Any vertical (resp., horizontal) line on $\Pi(j)$ contains at most one point (resp., two points) from $I_j$. For every pair of points $p,q$ that belong to $I_j$ and lie on the same horizontal line, the corresponding vertices are adjacent in $H$, and belong to a distinct cycle with odd number of vertices in $H$.  
\end{description}

\begin{figure}[h]
\centering
\includegraphics[width=\textwidth]{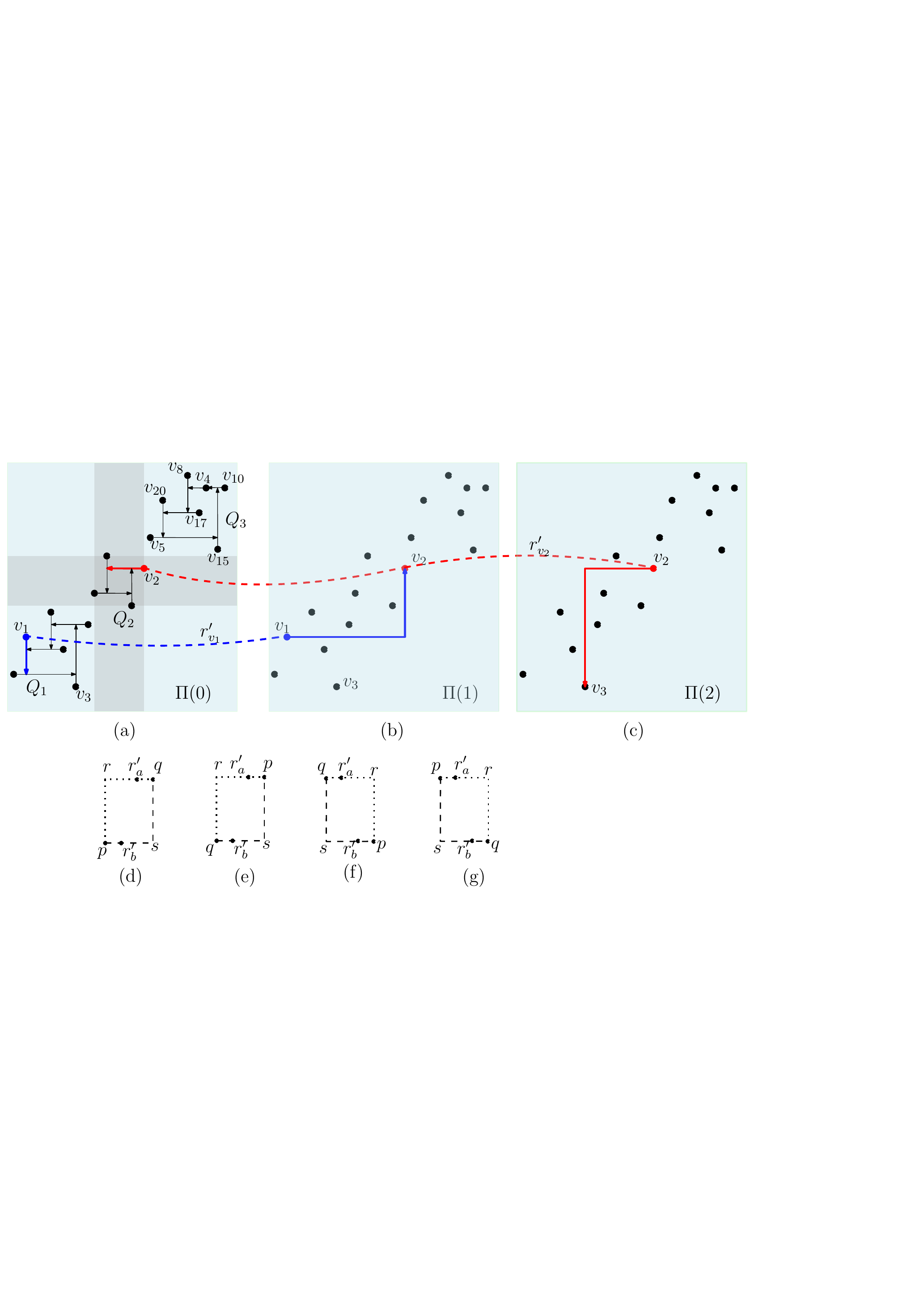}
\caption{ Illustration for the proof of Theorem~\ref{4-reg-gu}: (a) $\Psi_H$, (b)--(c) Extensions of $r'_{v_1}$ and $r'_{v_2}$, where the dashed lines are $(+Z)$-lines. (d)--(g) Illustration for Cases 1--3. 
}
\label{3-string}
\end{figure}

For each $j$ from $1$ to $n-1$, we realize the adjacency between    $v_j$ and $v_{j+1}$ by extending $r'_j$ on $\Pi(j)$. Note that it suffices to use two bends to route $r'_j$ to touch $r'_{j+1}$, where one bend is to enter $\Pi(j)$ and the other is to reach $r'_{j+1}$.  Figs.~\ref{3-string}(b)--(c) illustrate  the extension of $r'_j$.  We now claim that one can find such an extension of $r'_j$ without introducing any crossing.  Assume that $r'_j$ and $r'_{j+1}$ intersect $\Pi(j)$ at points $p$ and $q$, respectively, and  suppose for a contradiction that any  2-bend extension of $r'_j$ to touch $r'_{j+1}$  on $\Pi(j)$ would  introduce an unnecessary adjacency.  We  now consider the following scenarios. 

\textbf{Case 1 ($p$ lies below and to the left of $q$):}  
We refer to the configuration of Fig.~\ref{3-string}(d). Let $R_{pq}$ be the rectangle determined by $p$ and $q$ on $\Pi(j)$. Let $r$ and $s$ be the top-left and bottom-right corners of $R_{pq}$. Assume that both $p,r,q$ and $p,s,q$   introduce unnecessary adjacencies, e.g., $p,r,q$ intersects some arrow $r'_a$ and $p,s,q$ intersects some arrow $r'_b$.  

By the sparseness property of $\Psi_H$, the arrows $r'_a$ and $r'_b$ cannot lie on $pr$ or $qs$, and hence, they   
 must intersect the segments $rq$ and $ps$, respectively. Since the intersection point with $r'_a$ lies to the left of $q$, we have $a<j+1$. Moreover, since $v_a$ and $v_j$ are distinct vertices, we have $a<j$.  Consequently, $v_a$ cannot intersect $\Pi(j)$, and we can extend $r'_j$ along $p,r,q$.

\textbf{Case 2 ($p$ lies above and to the right of $q$):} 
 This scenario is similar to Case 1, e.g., see Fig.~\ref{3-string}(e). Since the intersection point of $r'_a$ is to the left of $p$, $a<j$. Consequently, $v_a$ cannot intersect $\Pi(j)$, and we can extend $r'_j$ along $p,r,q$.

\textbf{Case 3 (Otherwise):} Since $(v_j,v_{j+1})$ is a Hamiltonian edge, by the sparseness property of $\Psi_H$, $p$ and $q$   cannot lie on the same horizontal line on $\Pi(j)$. The remaining cases are as follows: (I) $p$ lies below  and to the right of $q$, and (II) $p$ lies above  and to the left of $q$. Figs.~\ref{3-string}(f)--(g) illustrate these two cases. 
 By the sparseness property, $\{v_{j+1},v_a\}$ and $\{v_{j},v_b\}$ correspond to distinct cycles in $H$. Since the cycles of $H$ are placed diagonally along a line with slope $+1$, none of these two configurations can arise.

Finally, it is straightforward to realize $(v_1,v_n)$ by routing $r'_n$ on $\Pi(n)$ using two bends, and then moving downward to touch $r_1$. Therefore, the string representing $v_n$  is a $B_4$-string.
\end{proof}

\section{Details of Section~\ref{pre}}
\label{defn}

\textbf{Staircase representation:} Consider first the case when $k$ is even. We first draw an $xy$-monotone orthogonal polyline  $\mathcal{O}$ with $k-2$ unit-length segments $l_2,\ldots,l_{k-1}$, where the segments $l_2,l_4,\ldots$ are vertical, and $l_3,l_5,\ldots$ are horizontal. We then join the  end points of $\mathcal{O}$ using a horizontal line segment $l_1$ and a vertical  line segment $l_k$, as shown in Fig.~\ref{even2}(a).  We order the edges of the resulting orthogonal polygon $O$ in counterclockwise order, and assign $w_i$ the segment $l_i$, where $1\le i\le k$. We then extend the horizontal segments, except $l_1$,  one-half unit to the  right, and the vertical segment, except $l_k$, one-half unit upward. Finally, we extend the segments corresponding to $l_1$ and $l_k$ one-half unit to the left and downward, respectively. Observe  that the extended segments do not introduce any crossing.  Consequently, each vertex $w_i$ can now be represented as an  axis-aligned arrow $r_i$, where the extended end of the segments   correspond to the origins, e.g., see  Fig.~\ref{even2}(b). Furthermore, the origins of $r_1,\ldots,r_k$ are in \emph{general position}, i.e., no two of them  have the same $x$ or $y$-coordinate. 

\begin{figure}[h]
\centering
\includegraphics[width=.7\textwidth]{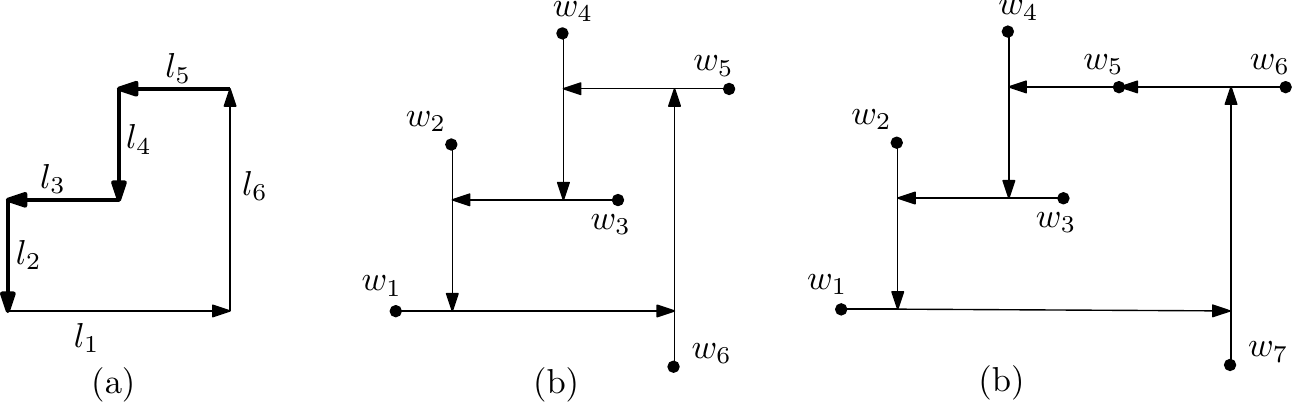}
\caption{Construction of a staircase representation. }
\label{even2}
\end{figure}

If $k$ is odd, then we take a staircase representation of a cycle of $k-1$ vertices, and then subdivide the topmost horizontal segment to create the a new arrow, as illustrated in Fig.~\ref{even2}(c).

\section{Details of Section~\ref{string-contact}}
\label{case2b}

\textit{Case 2a ($\C_i$ has two or more vertices):}   

 If $L_{v_h}$ contains only a peak $o$ on $\Pi(\alpha(v_t))$, then 
 we represent $L_{v_t}$ using a rightward arrow $r$ that starts at $init(v_t)$
 and stops at some point $o'$ to the right of the $Y$-line though $o$.
 We then  construct a  drawing $D$ of $\C_i$ and $blue(\C_i)$ on $\Pi(\alpha(v_t))$ mimicking 
 the construction of $\Psi'_1$. However, this is simpler since the red parent of 
 $v_t$ does not belong to $\C_i$. We ensure that $L_{v_j}$ has a $Y$-segment, and thus   
 the hand of $L_{v_{j+q}}$  on $\Pi(\alpha(v_t))$ may be horizontal or vertical.
 It is now straightforward to place $D$ (possibly taking vertical reflection)  
  such that $L_{v_j}$ touches $L_{v_h}$ at $o$ and $L_{v_t}$ touches $L_{v_{j+q}}$
 at $o'$. Figs.~\ref{GammaI}(c)--(f) show  all distinct scenarios.

 Assume now that $L_{v_h}$ lies  entirely on $\Pi(\alpha(v_t))$.
 By Invariant $I_3$ and  the choice of $\alpha(\cdot)$ values, $L_{v_h}$ is non-degenerate and 
 (the extension of) its horizontal hand intersects the  vertical line through $init(v_t)$ in $\Psi'_{i-1}$.
 We now  construct a  drawing $D$ of $\C_i$ and $blue(\C_i)$ on $\Pi(\alpha(v_t))$ mimicking 
 the construction of $\Psi'_1$, but ensuring that  
 $L_{v_{j}}$  contains a $Y$-segment on $\Pi(\alpha(v_t))$. Consequently, the hand $\ell$ of $L_{v_{j+q}}$
 that lies on  $\Pi(\alpha(v_t))$ may be horizontal or vertical (depending on the number 
 of vertices in $\C'_i$). If $\ell$ is vertical (resp., horizontal), then we represent $L_{v_t}$
 as a horizontal (resp., vertical) arrow with origin at $init(v_t)$. 
 It is now straightforward to  place 
 $D$ (possibly taking vertical reflection) on  $\Pi(\alpha(v_t))$ 
  such that $L_{v_j}$ touches the horizontal hand of $L_{v_h}$, and   
 $L_{v_t}$ touches $L_{v_{j+q}}$ at $o'$,
 e.g., see Figs.~\ref{GammaI2}(a)--(b).

Note that in both cases we may need to perturb the drawing $D$ such that the $L$-shapes in $D$ do not create
 any unnecessary intersections, and $blue(\C'_i)$ and the degree-one blue vertices of $\Psi'_{i-1}$ lie in general position.

\textit{Case 2b ($\C_i$ has only one vertex):}  
If $L_{v_h}$ contains only a peak $o$ on $\Pi(\alpha(v_t))$, then we construct $L_{v_j}$ as a horizontal line segment $ab$ that passes through $o$, and represent $L_{v_t}$ as a vertical arrow that touches $L_{v_j}$, e.g., see Fig.~\ref{GammaI2}(c). We then construct another hand of $L_{v_j}$ using a $(+Z)$-arrow $r$ that starts at $b$, and ends at $\Pi(\alpha(w))$, where $w$ is the blue child of $v_j$. We create the initiator of $w$ at the peak of $r$.

Assume now that $L_{v_h}$ lies  entirely on $\Pi(\alpha(v_t))$. By Invariant $I_3$, the $Y$-line through $init(v_t)$ intersects (the extension of) the horizontal hand of $L_{v_h}$. It is thus straightforward to construct $L_{v_j}$
 as a $Y$-segment $ab$ that touches the horizontal hand of $L_{v_h}$ at $a$, and then construct $L_{v_t}$ as a horizontal arrow with origin $init(v_t)$ that touches $L_{v_j}$.  We then construct another hand of $L_{v_j}$ using a $(+Z)$-arrow $r$ that starts at $b$, and ends at $\Pi(\alpha(w))$, where $w$ is the blue child of $v_j$. We create the initiator of $w$ at the peak of $r$, e.g., see Fig.~\ref{GammaI2}(d).
 
In both cases we choose $b$ carefully to  ensure the general position property of the initiators.

\section{Details of Section~\ref{lb}}
\label{applb}
 
\subsubsection{Proof of Theorem~\ref{5-reg}: }

Let $G$ be a 5-regular graph, and suppose for a contradiction that $G$ admits a string contact representation $D$. For each edge $(u,w)$,  if the string of $u$ touches the string of $w$, then direct the edge from $u$ to $w$. Note that $G$ has exactly $5n/2$ edges. Since each edge is either unidirected  or bidirected, the sum of all out-degrees is at least  $5n/2$. Therefore, there exists a vertex $v$  with  
 out-degree 3 or more. However, by definition, 
 no three strings in $D$ can meet at a point.   Therefore, the out-degree of $v$ cannot be larger than two, a contradiction.
 
\vspace{-.2cm}
\subsubsection{Proof of Theorem~\ref{4-reg}: }
Suppose for a contradiction that $K_5$ admits an $L$-contact representation, and let $D$ be such a representation of $K_5$. Let $v_1,\ldots, v_5$ be the vertices of $K_5$. Observe that any axis-aligned $L$-shape must entirely lie on one of the three types of plane: $\Pi_{xy}$, $\Pi_{yz}$, and $\Pi_{xz}$. Since there are five $L$-shapes in $D$, by  pigeonhole principle, the plane types for at least two $L$-shapes must be the same.  

Without loss of generality assume that   $L_{v_1}$ and $L_{v_2}$ both lie on $\Pi_{xy}$. Since $v_1$ and $v_2$ are adjacent,  the planes of $L_{v_1}$ and $L_{v_2}$ cannot be distinct.  Therefore, without loss of generality assume that they coincide with $\Pi(0)$. Since $v_i$, where $3\le i\le 5$,  is adjacent to both $v_1$ and $v_2$, $L_{v_i}$ must share a point $a_i$ with   $L_{v_1}$ and a point $b_i$ with $L_{v_2}$. Since no three strings meet at a point in $D$, the points $a_i$ and $b_i$ are distinct. The rest of the proof claims that  the polygonal path $P_i$ of $L_i$ that starts at $a_i$ and ends at $b_i$, lies entirely on $\Pi(0)$, and the common point of $L_{v_i}$ and $L_{v_j}$, where $3\le i<j\le 5$, lies on $\Pi(0)$. These properties can be used to argue that  $D$ is a string contact representation of $K_5$ on $\Pi(0)$, which contradicts  that  $K_5$ is a non-planar graph.

 We now claim that the polygonal path $P_i$ of $L_i$ that starts at $a_i$ and ends at $b_i$, lies entirely on $\Pi(0)$. Since $a_i$ and $b_i$ both lie on $\Pi(0)$, the claim is straightforward to verify when $P_i$ is a straight line segment. Therefore, assume that $P_i$ contains the joint $o_i$ of $L_i$. In this scenario, both the segments $a_io_i$ and $b_io_i$ are perpendicular to $\Pi(0)$. Since $P_i$ does not contain any line segment other than  $a_io_i$ and $b_io_i$, $a_i$ must coincide with $b_i$, a contradiction.
 
Observe now that at least one hand of $L_{v_i}$ lies on $\Pi(0)$. Therefore, the other hand of $L_{v_i}$ lies either  on $\Pi(0)$ or perpendicular to $\Pi(0)$. Therefore, the common point of $L_{v_i}$ and $L_{v_j}$, where $3\le i<j\le 5$, must lie on $\Pi(0)$. Consequently, $D$ is a string contact representation of $K_5$ on $\Pi(0)$, which contradicts  that  $K_5$ is a non-planar graph.

\vspace{-.2cm}
\subsubsection{Proof of Theorem~\ref{3-reg}: }
Suppose for a contradiction that $K_{3,3}$ admits a  segment contact representation, and let $D$ be such a representation of $K_{3,3}$. Let $\{v_1,v_2,v_3\}$ and $\{w_1,w_2,w_3\}$ be the two vertex sets corresponding to $K_{3,3}$. Observe now that any axis-aligned polygon of four line segments must lie on one of the following three types of plane: $\Pi_{xy}$, $\Pi_{yz}$, and $\Pi_{xz}$. Therefore, without loss of generality we may assume that segments corresponding to the cycle $v_1,w_1,v_2,w_2$ lie entirely on $\Pi(0)$.

Since the segments corresponding to $v_1,w_1,v_2,w_2$ bounds a non-degenerate region of $\Pi(0)$, the segments corresponding to $u_1,w_1$ cannot be collinear, and hence they would determine the plane $\Pi(0)$. Consequently, the cycles $u_1,w_1,u_3,w_3$ would force the segments of  $u_3$ and $w_3$ to lie on $\Pi(0)$. Consequently, $D$ must be a string contact representation of $K_{3,3}$ on $\Pi(0)$, which contradicts  that $K_{3,3}$ is a non-planar graph.

\begin{figure}[pb]
\centering
\includegraphics[width=.5\textwidth]{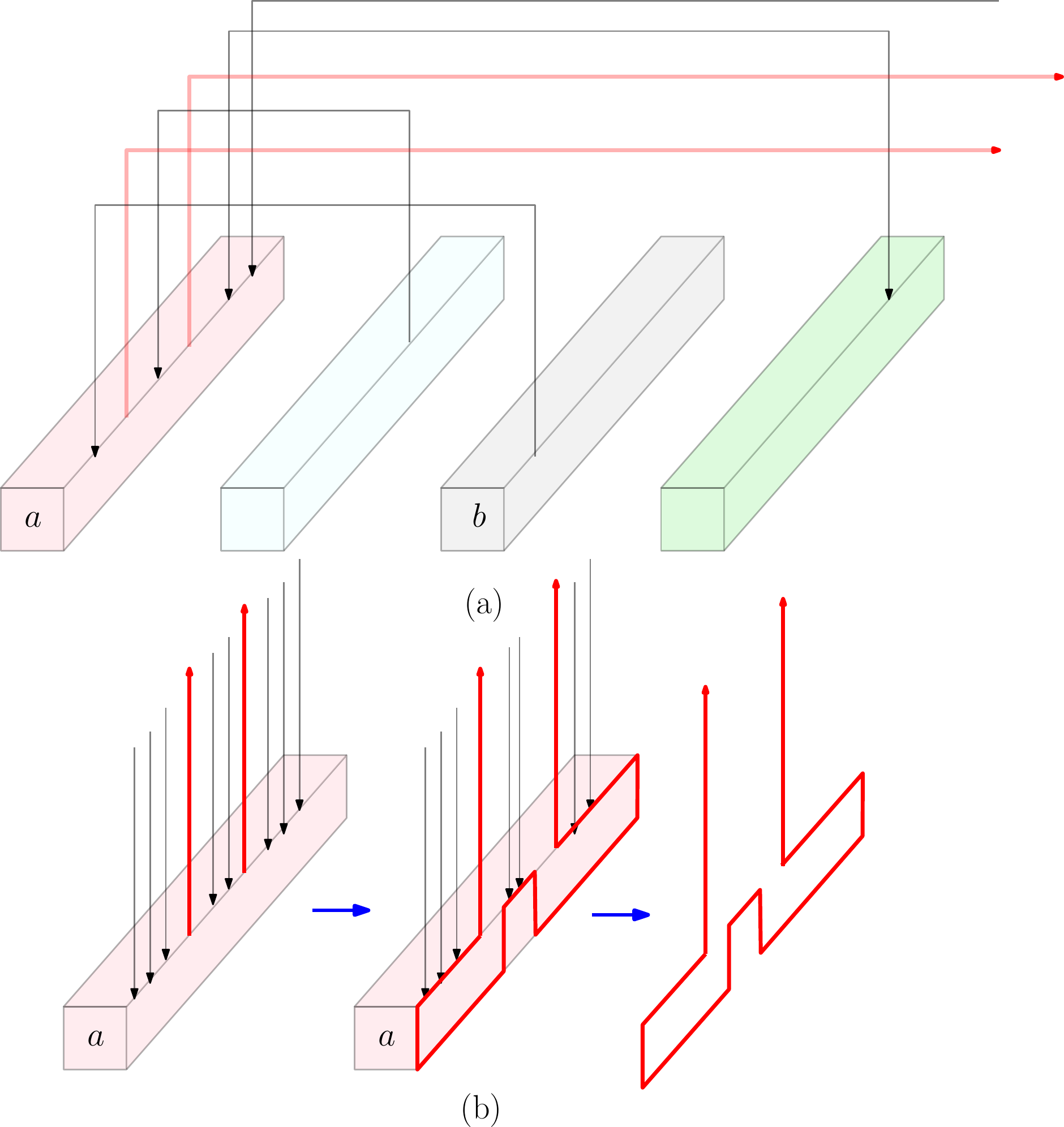}
\caption{Transforming a graph orientation into a string contact representation.}
\label{general}
\end{figure}

\section{From Graph Orientation to String Contact Representations}
\label{app:general}
Given an edge oriented graph (each edge is unidirected), where every vertex 
 has outdegree at most two, we can transform it into a string contact representation 
 using constant number of bends, as follows: 
 
Represent vertices parallel boxes as illustrated in Fig.~\ref{general}(a).
 For each edge $(a,b)$ draw a polygonal path between the corresponding boxes $a$ and $b$
  by following the directions Up, Right, Down, and 
  ensure that the edges lie on distinct planes parallel to  $\Pi_{xy}$. 
  The general setup for each box is illustrated in  Fig.~\ref{general}(a).
  One can now construct the string corresponding to a box by connecting 
  the outgoing edges by a polygonal path of constant number of bends, as 
  shown in  Figs.~\ref{general}(b).

\section{A Walkthrough Example}
\label{app:last}

Figs.~\ref{test}--\ref{summary} illustrate a walkthrough example according to the incremental 
 construction described in Section~\ref{sec:lcontact}.
\begin{figure}[h]
\centering
\includegraphics[width=.7\textwidth]{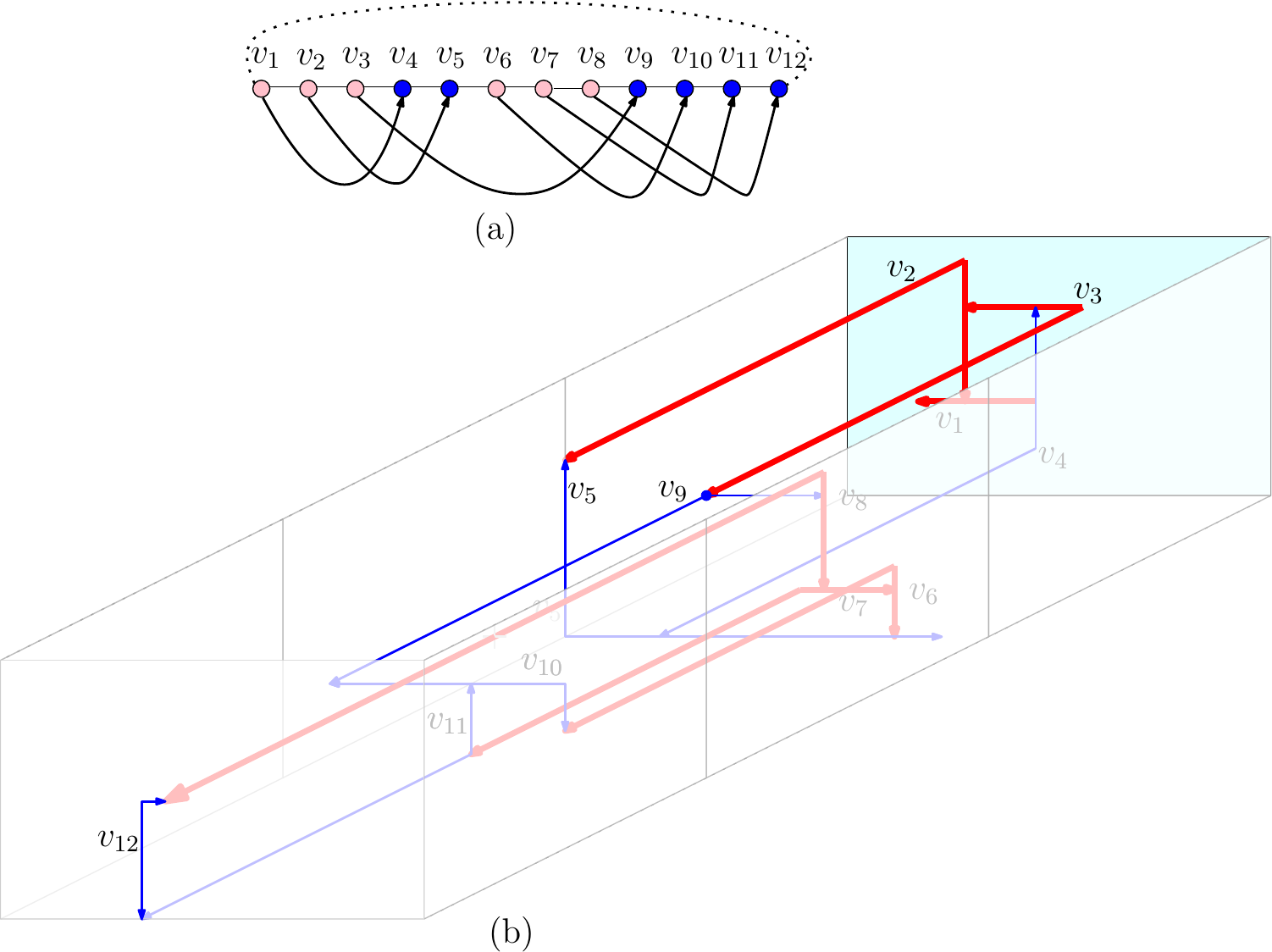}
\caption{(a) A 3-regular Hamiltonian bipartite graph $G$. (b) A schematic representation of an $L$-contact representation of $G$ minus one edge (computed by our algorithm). }
\label{test}
\end{figure}

\begin{figure}[h]
\centering
\includegraphics[width=\textwidth]{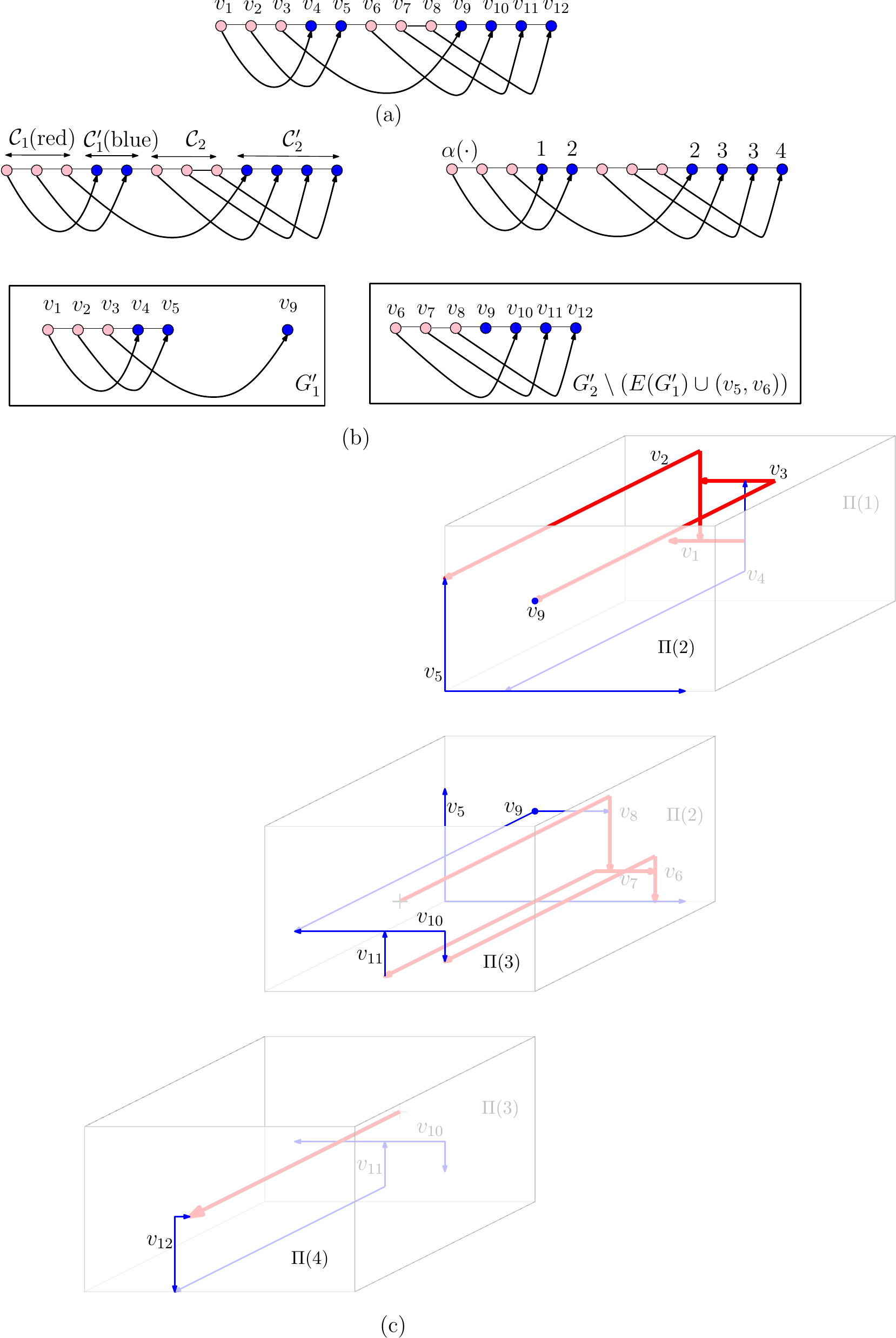}
\caption{(a) A 3-regular Hamiltonian bipartite graph $G$ minus one Hamiltonian edge. 
 (b) Preliminary setup. (c) Incremental Construction. }
\label{summary}
\end{figure}

\end{document}